\newif\iffull
\title{Measure-Theoretic Aspects of\\ Star-Free and Group Languages}
\titlerunning{Measure-Theoretic Aspects of Star-Free and Group Languages} 
\author{Ryoma Sin'ya}{Akita University, Akita, Japan}{ryoma@math.akita-u.ac.jp}{https://orcid.org/0000-0002-8152-998X}{JST ACT-X Grant Number JPMJAX210B.}
\author{Takao Yuyama}{ZEN University, Kanagawa, Japan}{takao\_yuyama@zen.ac.jp}{}{JSPS KAKENHI Grant Number JP20H05961.}
\authorrunning{R.\, Sin'ya} 
\keywords{Automata, measure theory, star-free languages, group languages} 
\theoremstyle{plain}
\title{Asymptotic Approximation\\ by Regular Languages}
\author{Ryoma Sin'ya\inst{1}}
\institute{Akita University\\
\email{ryoma@math.akita-u.ac.jp}\footnote{The author is also with RIKEN
AIP.}}
\newcommand{\CA}{{\cal A}}
\newcommand{\CJ}{\mathrel{\cal J}}
\newcommand{\CR}{\mathrel{\cal R}}
\newcommand{\CH}{\mathrel{\cal H}}
\newcommand{\ie}{{\it i.e.}}
\newcommand{\eg}{{\it e.g.}}
\newcommand{\resp}{resp. }
\newcommand{\cf}{{\it cf. }}
\newcommand{\defeq}{=}
\newcommand{\fin}{\mathsf{FIN}}
\newcommand{\group}{\mathsf{G}}
\newcommand{\gnil}{\mathsf{Gnil}}
\newcommand{\gsol}{\mathsf{Gsol}}
\newcommand{\gcom}{\mathsf{Gcom}}
\newcommand{\modl}{\mathsf{MOD}}
\newcommand{\pt}{\mathsf{PT}}
\renewcommand{\sf}{\mathsf{SF}}
\newcommand{\at}{\mathsf{AT}}
\newcommand{\gd}{\mathsf{GD}}
\newcommand{\ext}{\mathrm{Ext}}
\newcommand{\rext}{\mathrm{RExt}}
\newcommand{\comp}[1]{#1^{\mathrm{c}}}
\newcommand{\setmid}{\mathrel{}\middle|\mathrel{}}
\newcommand{\setdelimiter}{\setmid}
\newcommand{\set}[2]{\mleft\{\, #1 \setdelimiter #2 \,\mright\}}
\newcommand{\pd}[1]{\delta_{#1}}
\renewcommand{\d}{\pd{A}}
\newcommand{\pcd}[1]{\delta_{#1}}
\newcommand{\cd}{\pcd{A}}
\newcommand{\plm}[1]{\underline{\mu}_{#1}}
\newcommand{\pum}[1]{\overline{\mu}_{#1}}
\newcommand{\bool}{{\mathscr B}}
\newcommand{\monc}{{\mathscr M}}
\newcommand{\nat}{\mathbb{N}}
\newcommand{\integer}{\mathbb{Z}}
\newcommand{\modint}[1]{\integer / #1 \integer}
\newcommand{\card}[1]{\#\!\left(#1\right)}
\newcommand{\CC}{{\cal C}}
\newcommand{\CD}{{\cal D}}
\newcommand{\CB}{{\cal B}}
\newcommand{\CT}{{\cal T}}
\newcommand{\CO}{{\cal O}}
\newcommand{\reg}{\mathrm{REG}}
\begin{document}

\maketitle

\begin{abstract}
A language \(L\) is said to be \(\CC\)-measurable, where \(\CC\) is a
class of languages, if there is an infinite sequence of languages in
\(\CC\) that ``converges'' to \(L\).
We investigate the properties of \(\CC\)-measurability in the cases where \(\CC\) is \(\sf\), the class of all star-free languages, and \(\group\), the class of all group languages.
It is shown that a language \(L\) is \(\sf\)-measurable if and only if \(L\) is \(\gd\)-measurable, where \(\gd\) is the class of all generalised definite languages (a more restricted subclass of star-free languages).
This means that \(\gd\) and \(\sf\) have the same ``measuring power'', whereas \(\gd\) is a very restricted proper subclass of \(\sf\).
Moreover, we give a purely algebraic characterisation of \(\sf\)-measurable regular languages, which is a natural extension of Sch\"utzenberger's theorem stating the correspondence between star-free languages and aperiodic monoids.
We also show the probabilistic independence of star-free and group languages, which is an important application of the former result.
Finally, while the measuring power of star-free and generalised definite languages are equal, we show that the situation is rather opposite for subclasses of group languages as follows. For any two local subvarieties \(\CC \subsetneq \CD\) of group languages, we have \(\{ L \mid L \text{ is } \CC\text{-measurable}\}
\subsetneq \{ L \mid L \text{ is } \CD\text{-measurable}\}\).
\end{abstract}
 
\section{Introduction}
One of the fundamental results in semigroup theory is the Krohn-Rhodes decomposition theorem stating that every finite monoid can be decomposed in a suitable sense into a wreath product of finite aperiodic monoids and finite groups.
According to the Krohn-Rhodes decomposition theorem, the class of \emph{star-free languages} (the language counterpart of aperiodic monoids) and \emph{group languages} (languages recognised by finite groups) can be considered as two major ``building blocks'' of regular languages (see Chapter~7 of \cite{DAM} for detailed introduction of the Krohn-Rhodes theorem).
While the theory of star-free languages is well-studied (\eg, Sch\"utzenberger's algebraic characterisation~\cite{Schutzenberger1965190},  McNaughton-Papert's logical characterisation~\cite{counterfree}, the dot-depth hierarchy (described later) proposed by Cohen and Brzozowski~\cite{dot} and recent progresses relating to this hierarchy (see \cite{dot3} or the survey \cite{dotdepth}), etc.), the theory of group languages seems not, albeit that there is a result on an analogue of the dot-depth hierarchy starting with group languages~\cite{margolis} and the decidability of the so-called separation problem for group languages~\cite{place}.
Although the definition of group languages is simple from an algebraic point of view, this class does not have a language theoretic nor logical different characterisation.
To borrow the phrase used by Place--Zeitoun~\cite{place}: ``This makes it difficult to get an intuitive grasp about group languages, which may explain why this class remains poorly understood.''

This paper sheds new light on these two major building blocks --- star-free and group languages --- of regular languages, by using a measure-theoretic notion so-called \emph{\(\CC\)-measurability}.
A language \(L\) is said to be \(\CC\)-measurable if there exists an infinite sequence of pairs of languages \((K_n, M_n)_{n \in \nat}\) in \(\CC\) such that \(K_n \subseteq L \subseteq M_n\) holds for all \(n\) and the \emph{density} (described later) of the difference \(M_n \setminus K_n\) tends to zero as \(n\) tends to infinity.
The notion of \(\CC\)-measurability was introduced in \cite{S2021} and used for classifying non-regular languages by using regular languages.
The notion of \(\CC\)-measurability can be considered as a non-commutative and high-dimensional extension of Buck's measure theoretic notion for subsets of natural numbers so-called \emph{measure density}~\cite{Buck}, and the notion of \(\CC\)-measurability can be defined by using a purely measure theoretic notion called \emph{Carath{\'e}odory extension}~\cite{carathe}.
So far, the decidability and alternative characterisation of \(\CC\)-measurability is systematically studied for some subclasses of regular languages.
For example, simple and decidable characterisations of \(\pt\)-measurable and \(\at\)-measurable languages are given in \cite{ltmeasure}, where \(\pt\) is the class of all piecewise testable languages and \(\at\) is the class of all alphabet testable languages (\ie, languages defined by the first-order logic with only one variable).
For the class \(\gd\) of \emph{generalised definite} languages (\ie, languages whose membership can be checked by finitely many prefix and suffix tests), an automata theoretic characterisation of the \(\gd\)-measurable regular languages is given in \cite{gdmeasure} and it is decidable in quadratic time with respect to the size of a given deterministic automaton~\cite{ppl2024}.
Some properties of the measuring power of star-free languages is investigated in \cite{carathe}, but the decidability for regular languages was left open.
The decidability of \(\group\)-measurability, where \(\group\) is the class of all group languages, for regular languages is still unknown, while the measurability of two restricted subclasses
 \(\gcom\) (languages recognised by \emph{commutative} groups) and \(\modl\) (languages defined by length-modulo condition) of group languages are shown to be decidable for regular languages~\cite{gcommeasure}. 

\subsection*{Our contribution}
The main results of this paper (shown in Section~\ref{sec:sf}--\ref{sec:group}), relating to the measuring power of star-free and group languages, are three kinds as follows.

The first main result is about the measuring power of star-free languages (Section~\ref{sec:sf}).
It is shown that a language \(L\) is \(\sf\)-measurable if and only if \(L\) is \(\gd\)-measurable (Theorem~\ref{thm:sf}).
Since the \(\gd\)-measurability is known to be decidable for regular languages~\cite{gdmeasure}, this characterisation also gives the decidability of the \(\sf\)-measurability for regular languages.
We also give an algebraic characterisation of the \(\sf\)-measurable regular languages (Theorem~\ref{thm:algsf}), which can be considered as a natural extension of Sch\"utzenberger's theorem.

The second main result (Section~\ref{sec:ind}), which is an application of the first main result, is the \emph{probabilistic independence} of star-free and group languages (Theorem~\ref{thm:ind}) stating that
\(\cd(L \cap K) = \cd(L) \cdot \cd(K)\) holds for any
star-free language \(L\) and any group language \(K\) where
\(\cd\) is the density function defined as
\(\cd(X) \defeq \lim_{n \rightarrow \infty} \frac{1}{n} \sum_{k = 0}^{n-1} \card{X \cap A^k}/\card{A^k}\) (where \(\card{S}\) denotes the cardinality of a set \(S\)).
Since the class of star-free languages corresponds to finite aperiodic monoids (monoids having no non-trivial subgroups) via Sch\"utzenberger's theorem~\cite{Schutzenberger1965190}, it is sometimes said that star-free languages and group languages are ``orthogonal''.
For example, it is described in \cite{nato} that `This notion [aperiodic monoid] is in some sense ``orthogonal'' to the notion of groups' and in \cite{DAM} that `Aperiodic monoids are orthogonal to groups.' Our second main result gives a rigorous statement how star-free and group languages are orthogonal in the probability theoretic sense.

Between \(\gd\) and \(\sf\), there is a fine-grained infinite strict hierarchy called the \emph{dot-depth hierarchy} introduced by Cohen and Brzozowski~\cite{dot} in 1971.
For a family \(\CC\) of languages, we denote by
\(\monc \CC = \{ L_1 \cdots L_k \mid k \geq 1, L_1, \ldots, L_k \in \CC
\} \cup \{\{\varepsilon\}\}\) the monoid closure of \(\CC\),
and denote by \(\bool\CC\) the Boolean closure of \(\CC\).
The dot-depth hierarchy starts with the family \(\CB_0\) of all finite
or co-finite languages, and continues as \(\CB_{i+1} = \bool \monc \CB_{i}\) for each \(i \geq 0\).
By definition, we have \(\sf = \bigcup_{i \geq 0} \CB_i\)
and also \(\CB_0 \subsetneq \gd \subsetneq \CB_1\).
However, the first main result tells us that this infinite strict hierarchy \emph{collapses} via taking all measurable sets in the sense that 
\(\{ L \mid L \text{ is } \gd\text{-measurable}\}
 = \{ L \mid L \text{ is } \sf\text{-measurable}\}\)
holds.
The last main result (Section~\ref{sec:group}) tells us that, for the measuring power of subclasses of group languages, the situation is rather opposite.
We show that, for any two local subvarieties\footnote{A \emph{local variety} is a family of regular languages over \(A\) closed under Boolean operations and quotients.} \(\CC \subsetneq \CD \subseteq \group\) of group languages, we have \(\{ L \mid L \text{ is } \CC\text{-measurable}\}
\subsetneq \{ L \mid L \text{ is } \CD\text{-measurable}\}\), and moreover, \(\{ L \mid L \text{ is } \CC\text{-measurable}\} \cap \group = \CC\) holds (Theorem~\ref{thm:group-robust}).

In Section~\ref{sec:fut} we summarise our main results and describe some open problems.
 \section{Preliminaries}\label{sec:pre}
This section provides the precise definitions of density and measurability.
We denote by \(\reg_A\) the family of all regular languages over an alphabet \(A\). For a word \(w \in A^*\) and a letter \(a \in A\), we write \(|w|_a\) for the number of occurrences of \(a\) in \(w\). The complement of \(L\) over \(A\) is denoted by \(\comp{L} = A^* \setminus L\).
We assume that the reader has a standard knowledge of algebraic language theory (\cf \cite{Lawson,DAM,MPRI}).

For a finite monoid \(M\) and an element \(x \in M\), the unique \emph{idempotent power} (\ie, the power \(x^n\) satisfing \(x^{n} x^{n} = x^{n}\)) of \(x\) is denoted \(x^\omega\).
A monoid \(M\) is called \emph{aperiodic} if \(x^{\omega+1} = x^\omega\) holds for any \(x \in M\).
Recall that Green's \(\CH\)-relation on a monoid \(M\) is defined as \(x \CH y \Leftrightarrow (xM = yM) \land (Mx = My)\), and
\(\CJ\)-relation is defined as \(x \CJ y \Leftrightarrow MxM = MyM\).
A monoid \(M\) is said to be \({\cal G}\)-trivial if every \({\cal G}\)-class is a singleton where \({\cal G} \in \{\CH, \CJ\}\).
It is well-known that a finite monoid \(M\) is aperiodic if and only if \(M\) is \(\CH\)-trivial.
The main targets of this paper are the following three subclasses of regular languages. The \emph{star-free languages}, the \emph{generalised definite languages} \(\gd\), and the \emph{group languages} \(\group\):
\begin{align*}
 \sf_A &\defeq \set{L \subseteq A^*}{L \text{ is recognised by a finite \(\CH\)-trivial monoid}},\\
 \gd_A &\defeq \bool\set{wA^*, A^*w}{w \in A^*},\\
 \group_A &\defeq \set{L \subseteq A^*}{L \text{ is recognised by a finite group}}.
\end{align*}
Also, we denote by \(\gcom_A\) the class of all languages recognised by finite commutative groups.
A monoid \(M\) divides a monoid \(N\) if \(M\) is a monoid homomorphic image of a submonoid of \(N\).
It is well-known that a monoid \(M\) recognises a language \(L\) if and only if the syntactic monoid \(M_L\) of \(L\) divides \(M\), hence
\(L\) is in \(\sf_A\) (\resp \(\group_A\)) if and only if \(M_L\) is a finite \(\CH\)-trivial monoid (\resp a finite group).
It is easy to see that a language \(L\) is generalised definite if and only if \(L = E \cup \bigcup_{(u, v) \in F} uA^*v\) for some finite sets \(E \subseteq A^*\) and \(F \subseteq A^* \times A^*\).

\subsection{Density and measurability of formal language}
For a set $X$, we denote by $\card{X}$ the cardinality of $X$.
We denote by $\nat$ the set of natural numbers including $0$.

\begin{definition}[\cf \cite{codes}]\upshape
The \emph{density} \(\cd(L)\) of \(L \subseteq A^*\) is defined as
\[
 \cd(L) \defeq \lim_{n \rightarrow \infty} \frac{1}{n} \sum_{k = 0}^{n-1} \frac{\card{L \cap A^k}}{\card{A^k}}
\]
 if it exists, otherwise we write \(\cd(L) = \bot\).
 A language \(L\) is called \emph{null} if \(\cd(L) = 0\),
 and dually, \(L\) is called \emph{co-null} if \(\cd(L) = 1\).
\end{definition}

\begin{example}\label{ex:density}
It is known that every regular language has a rational density (\cf \cite{codes}) and the density is computable for a given automaton.
For each word \(w\), the language \(A^* w A^*\),
 the set of all words that contain \(w\) as a factor,
 is of density one (\ie, co-null). This fact follows from the so-called {\em infinite monkey theorem}:
take any word \(w\). A random word of length \(n\) contains \(w\) as a factor with probability tending to one as \(n\) tends to infinity.
\end{example}

We list some basic properties of the density as follows.
\begin{lemma}[\cite{codes}]\label{lem:density}
The function \(\cd\) is a finitely additive measure.
In particular, for any \(K, L \subseteq A^*\) with \(\cd(K) = \alpha, \cd(L) = \beta\), we have:
\begin{enumerate}
\item \(\cd(L \setminus K) = \beta - \alpha\) if \(K \subseteq L\). 
\item \(\cd(K \cup L) = \alpha + \beta\) if \(K \cap L = \varnothing\). 
\item \(\cd(uLv) = \cd(L) \cdot \card{A}^{-|uv|}\) for each \(u, v \in A^*\).
\end{enumerate}
\end{lemma}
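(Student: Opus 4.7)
The plan is to reduce all three claims to the elementary fact that for each fixed length \(n\), the map \(X \mapsto \card{X \cap A^n}/\card{A^n}\) is a finitely additive probability measure on the power set of \(A^n\), and then to propagate this linearity through the \cesaro averaging and limit in the definition of \(\cd\). Throughout, I set \(a_k \defeq \card{L \cap A^k}/\card{A^k}\) and \(b_k \defeq \card{K \cap A^k}/\card{A^k}\), so by hypothesis the \cesaro means of \((a_k)\) and \((b_k)\) converge to \(\beta\) and \(\alpha\), respectively.

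For (1), the containment \(K \subseteq L\) implies \(\card{(L \setminus K) \cap A^k} = \card{L \cap A^k} - \card{K \cap A^k}\) at every length \(k\), so the normalised counts for \(L \setminus K\) form the sequence \(a_k - b_k\); linearity of finite sums gives \cesaro mean \(\beta - \alpha\) at each stage, and passing to the limit yields \(\cd(L \setminus K) = \beta - \alpha\). Statement (2) is the same argument applied to the identity \(\card{(K \cup L) \cap A^k} = \card{K \cap A^k} + \card{L \cap A^k}\) that is valid when \(K \cap L = \varnothing\).

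For (3), the key observation is that \(w \mapsto uwv\) is injective, so for every \(n \geq |uv|\) it induces a bijection between \(L \cap A^{n-|uv|}\) and \(uLv \cap A^n\); hence \(\card{uLv \cap A^n}/\card{A^n} = \card{A}^{-|uv|} a_{n-|uv|}\), while for \(n < |uv|\) the left-hand side is zero. Writing \(S_m \defeq \frac{1}{m} \sum_{j=0}^{m-1} a_j\) for the \(m\)-th \cesaro mean of \((a_k)\), a direct reindexing shows that the \(n\)-th \cesaro mean of the shifted-and-rescaled sequence equals \(\card{A}^{-|uv|} \cdot \frac{n-|uv|}{n} \cdot S_{n-|uv|}\) up to a contribution of at most \(|uv|/n\) from the first \(|uv|\) zero terms; this tends to \(\card{A}^{-|uv|} \cdot \cd(L)\) as \(n \to \infty\).

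There is no genuine obstacle here. The only mildly delicate bookkeeping is the shift step in (3), where one must confirm that replacing \(n\) by \(n - |uv|\) inside the \cesaro average does not alter its limit; this follows from \((n-|uv|)/n \to 1\) and the uniform bound \(|a_k| \leq 1\), and is the standard fact that shifting a bounded sequence by a fixed finite amount preserves \cesaro convergence.
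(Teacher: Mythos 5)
Your proof is correct; note that the paper gives no proof of its own for this lemma, simply citing \cite{codes}, so there is nothing to compare against, and your argument is the standard one such a reference contains. The reduction of (1) and (2) to length-wise counting plus linearity of Ces\`aro means is exactly right, and your bookkeeping for (3) is in fact exact rather than approximate: the first \(|uv|\) terms of the shifted sequence are identically zero, so the \(n\)-th Ces\`aro mean equals \(\card{A}^{-|uv|}\cdot\frac{n-|uv|}{n}\cdot S_{n-|uv|}\) precisely, and the limit follows from \((n-|uv|)/n \to 1\) together with the assumed convergence of \(S_m\).
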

\begin{lemma}[{\cite[Example 13.4.9]{codes}}]\label{lem:groupdensity}
 Let \(\eta\colon A^* \rightarrow G\) be a surjective morphism onto a finite group. For any \(g \in G\), we have \(\cd(\eta^{-1}(g)) = \card{G}^{-1}\).
\end{lemma}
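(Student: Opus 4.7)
The plan is to reduce the statement to a standard fact about finite Markov chains. Writing $p_n(g) \defeq \card{\eta^{-1}(g) \cap A^n}/\card{A}^n$, the goal becomes $\frac{1}{n}\sum_{k=0}^{n-1} p_k(g) \to 1/\card{G}$ as $n \to \infty$, which is exactly the claim $\cd(\eta^{-1}(g)) = 1/\card{G}$.

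First I would introduce the stochastic matrix $P \in \real^{G \times G}$ with entries
$P(h, h') \defeq \card{\set{a \in A}{h' = h \cdot \eta(a)}}/\card{A}$.
A straightforward induction on $n$ shows that $p_n(g)$ equals the $(1_G, g)$-entry of $P^n$. Two observations then drive the argument. First, $P$ is \emph{doubly} stochastic: for every $a \in A$ the map $h \mapsto h \cdot \eta(a)$ is a bijection of $G$, so each column of $P$ sums to $1$, and consequently the uniform distribution on $G$ is stationary for $P$. Second, the associated chain is irreducible: since $\eta$ is surjective, $\eta(A)$ generates $G$ as a monoid, and any monoid-generating set of a finite group is automatically a group-generating set (every subsemigroup of a finite group is a subgroup), so between any two states of $G$ there is a path of positive probability.

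The conclusion now follows from the mean ergodic theorem for finite stochastic matrices: for an irreducible chain on $G$, the Ces\`aro average $\frac{1}{n}\sum_{k=0}^{n-1} P^k$ converges to the rank-one matrix whose every row is the unique stationary distribution, which in our setting is the uniform one; reading off the $(1_G, g)$-entry yields $\cd(\eta^{-1}(g)) = 1/\card{G}$. The one delicate point is that the chain may be periodic --- for instance when $A = \{a\}$ and $\eta(a)$ generates $G = \modint{d}$, the sequence $P^n$ oscillates and $p_n(g)$ does not converge pointwise --- but this is precisely the reason that the density is defined as a Ces\`aro average, which averages out the oscillations and leaves the stationary value $1/\card{G}$.
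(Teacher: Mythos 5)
Your proof is correct. Note that the paper itself gives no proof of this lemma---it is quoted from Berstel--Perrin--Reutenauer \cite{codes}, where it arises as a special case of their general theory of densities of recognisable sets computed via probability distributions on the syntactic monoid---so there is nothing in the text to compare against step by step; your argument is the standard self-contained one and lies in the same circle of ideas. All the delicate points are handled: the identification of \(p_n(g)\) with the \((1_G,g)\)-entry of \(P^n\) is a correct induction, double stochasticity (hence uniformity of the stationary distribution) follows from \(h \mapsto h\eta(a)\) being a bijection, irreducibility follows from surjectivity of \(\eta\) together with the fact that a nonempty subsemigroup of a finite group is a subgroup, and you correctly invoke the \emph{\cesaro} (mean ergodic) form of convergence rather than pointwise convergence of \(P^n\), which is exactly what is needed since the walk may be periodic and the density \(\cd\) is itself defined as a \cesaro average.
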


A word \(w\) is said to be a \emph{forbidden word} of a language \(L\) if \(L \cap A^* w A^* = \emptyset\) holds.
If a language \(L\) has a forbidden word, then \(\cd(L) = 0\) holds by the infinite monkey theorem because its complement \(\overline{L} \supseteq A^* w A^*\) should be co-null.
The following lemma states that the converse of infinite monkey theorem is actually true for regular languages.
We notice that there is a context-free language with density zero but having no forbidden word (the set of all palindromes, for example).
\begin{lemma}\label{lem:zero}
Let \(L\) be a regular language.
The density of \(L\) is zero if and only if \(L\) has a forbidden word.
\end{lemma}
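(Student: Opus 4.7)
The backward direction is essentially already indicated before the lemma: a forbidden word \(w\) gives \(L \subseteq A^* \setminus A^*wA^*\), so Example~\ref{ex:density} (the infinite monkey theorem) together with Lemma~\ref{lem:density}(1) forces \(\cd(L) = 0\). For the forward direction, I would work with the minimal complete DFA \(\mathcal{A} = (Q, A, \delta, q_0, F)\) of \(L\), restricted to states reachable from \(q_0\); call a state \(q\) \emph{dead} if no accepting state is reachable from \(q\). By minimality at most one such state exists; denote it \(q_\bot\) when present.

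The plan is to show that \(\cd(L) = 0\) forces every bottom strongly connected component (BSCC) of the reachability graph of \(\mathcal{A}\) to coincide with \(\{q_\bot\}\). View reading a uniformly random letter of \(A\) as a step of a finite Markov chain on \(Q\); then \(|L \cap A^n|/|A|^n = \sum_{q \in F} P_n(q_0, q)\), and \(\cd(L) = \sum_{q \in F} \pi_q\) with \(\pi_q\) the Cesàro limit \(\lim_{n \to \infty} \tfrac{1}{n} \sum_{k < n} P_k(q_0, q)\). Standard finite Markov chain theory gives \(\pi_q > 0\) exactly when \(q\) lies in a BSCC of the restricted reachability graph, so \(\cd(L) = 0\) implies \(F \cap C = \varnothing\) for every BSCC \(C\). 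Since a BSCC is closed under transitions, every \(q \in C\) is dead, and by the minimality of \(\mathcal{A}\) we get \(C = \{q_\bot\}\).

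It remains to build a forbidden word. Because \(\{q_\bot\}\) is the unique BSCC, every reachable state has a path to \(q_\bot\); enumerating the reachable states as \(q_1, \dots, q_k\) I pick \(w_i\) inductively so that \(\delta(\delta(q_i, w_1 \cdots w_{i-1}), w_i) = q_\bot\), using the absorption of \(q_\bot\) to preserve earlier synchronisations. Then \(w = w_1 \cdots w_k\) satisfies \(\delta(q, w) = q_\bot\) for every reachable \(q\), hence \(\delta(q_0, xwy) = q_\bot \notin F\) for all \(x, y \in A^*\), and \(w\) is a forbidden word of \(L\). The main obstacle is the middle step, where finite-Markov-chain theory is invoked to turn ``Cesàro mass zero'' into ``no BSCC contains an accepting state''. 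A self-contained replacement would directly lower-bound, in Cesàro mean, the proportion of length-\(n\) words driving \(q_0\) to an accepting state \(q^*\) located in a BSCC \(C\), using the period and return structure of \(C\); this is routine but notationally heavier than the Markov-chain phrasing.
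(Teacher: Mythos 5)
Your proof is correct, but it follows a genuinely different route from the one the paper presents. The paper proves the ``only if'' direction by contraposition using Myhill--Nerode (the argument it attributes to Tsuboi--Takeuchi \cite{TTmonkey}): if \(L\) has no forbidden word, then \(A^* = \bigcup_{(u,v)} u^{-1}Lv^{-1}\), and since \(L\) has only finitely many distinct quotients, finite additivity of \(\cd\) (Lemma~\ref{lem:density}) forces some quotient \(u^{-1}Lv^{-1}\) to be non-null, whence \(\cd(L) \geq \cd(u(u^{-1}Lv^{-1})v) > 0\). This is shorter, needs no automaton, and uses only the measure-theoretic toolkit already set up in Section~\ref{sec:pre}. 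Your argument is essentially the automata-theoretic proof that the paper mentions exists in \cite{Ryoma} but deliberately does not reproduce: you analyse the minimal DFA as a Markov chain, show that \(\cd(L)=0\) forces every bottom strongly connected component to be the dead state, and then explicitly build the forbidden word as a word synchronising all reachable states into \(q_\bot\). What your route buys is constructive content (an explicit forbidden word, and in fact a decision procedure) and structural insight into \emph{why} null regular languages avoid a factor; what it costs is the appeal to the Ces\`aro-limit/recurrence dichotomy for finite Markov chains, which you correctly flag as the one step that is not self-contained. That step is standard and your fallback sketch (lower-bounding the Ces\`aro mass of an accepting state inside a bottom component via its period and return structure) would close it, so I see no gap --- only a heavier prerequisite than the paper's chosen proof.
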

There are several proof strategies of Lemma~\ref{lem:zero}.
One can prove this by using the asymptotic formula of a rational series (\cf Theorem~9.8 of \cite{Salomaa:1978:ATA:578607}) with the fact that the growth rate of the language of the form \(\overline{A^* w A^*}\) is strictly smaller than \(\card{A}\)~\cite{Kim1977SomeCP}.
A more direct and automata-theoretic proof can be found in \cite{Ryoma}.
The simplest proof is the following strategy using Myhill--Nerode theorem due to Tsuboi--Takeuchi~\cite{TTmonkey}.
\begin{proof}
The ``if'' part is immediate as described.
To prove ``only if'' part by contraposition, we assume that \(L\) does not have any forbidden word as described above.
In this case, we have the equation \(A^* = \bigcup_{(u,v) \in A^* \times A^*} u^{-1} L v^{-1}\) since any word can be obtained by taking a quotient of some word in \(L\).
By Myhill--Nerode theorem, there is a finite set of pairs of words
\(F \subseteq A^* \times A^*\) such that
\(A^* = \bigcup_{(u,v) \in A^* \times A^*} u^{-1} L v^{-1} = \bigcup_{(u,v) \in F} u^{-1} L v^{-1}\).
If the language \(u^{-1} L v^{-1}\) is null for any \((u,v) \in F\), we can deduce that its finite union \(\bigcup_{(u,v) \in F} u^{-1} L v^{-1}\) must be null since \(\cd\) is a finitely additive measure.
 But this contradict with \(\cd(A^*) = \cd\left(\bigcup_{(u,v) \in F} u^{-1} L v^{-1}\right) = 1\).
Hence \(u^{-1} L v^{-1}\) should have a non-zero density for some \((u,v) \in F\), which implies that \(L\) should have a non-zero density, too.
\end{proof}

\begin{remark}\label{rem:kernel}
Let \(L\) be a regular language, \(M\) be a finite monoid and 
\(\eta\colon A^* \rightarrow M\) be a surjective morphism such that \(L = \eta^{-1}(\eta(L))\).
The \emph{kernel} \(K\) of \(M\) is a unique minimal ideal of \(M\) (it always exists since \(M\) is finite), \ie, the minimal \(\CJ\)-class of \(M\).
For every \(m\) not in the kernel \(K\) of \(M\),
\(\eta^{-1}(m) \cap A^* w A^* = \emptyset\) holds for any \(w \in \eta^{-1}(K)\) hence \(\d(\eta^{-1}(m)) = 0\) by the infinite monkey theorem.
This implies that \(\d(\eta^{-1}(K)) = 1\).
Moreover, \(\d(\eta^{-1}(K)) = 1\) implies that there exist at least one element \(n \in K\) such that \(\d(\eta^{-1}(n)) > 0\) since \(K\) is finite.
Because \(K\) is the kernel of \(M\), for any \(m \in K\) there exists a pair of 
element \(x, y \in M\) such that \(xny = m\), hence \(\eta^{-1}(m)\) contains \(u \eta^{-1}(n) v\) where \(u \in \eta^{-1}(x)\) and \(v \in \eta^{-1}(y)\) so that
\(\d(\eta^{-1}(m)) \geq \d(\eta^{-1}(n)) \cdot \card{A}^{-(|u|+|v|)}\) by Lemma~\ref{lem:density}, \ie, \(\d(\eta^{-1}(m)) > 0\) holds for each \(m \in K\).
\end{remark}
For more detailed properties of the density function \(\cd\), see Chapter~13 of \cite{codes}.

The notion of ``measurability'' on formal languages is defined by a standard measure theoretic way as follows.
\begin{definition}[\cite{S2021}]\upshape
Let \(\CC_A\) be a family of languages over \(A\).
For a language \(L \subseteq A^*\), we define its
 \emph{\(\CC_A\)-inner-density} \(\plm{\CC_A}(L)\)
 and \emph{\(\CC_A\)-outer-density} \(\pum{\CC_A}(L)\)
 over \(A\) as
 \[
 \begin{alignedat}{2}
  \plm{\CC_A}(L) &\defeq& \, \sup &\set{\cd(K)}{K \subseteq L, K \in \CC_A, \cd(K) \neq \bot} \text{ and }\\
  \pum{\CC_A}(L) &\defeq& \inf &\set{\cd(K)}{L \subseteq K, K \in \CC_A, \cd(K) \neq \bot} \text{, respectively.}
 \end{alignedat}
 \]
A language \(L\) is said to be
 \emph{\(\CC_A\)-measurable} if
 \(\plm{\CC_A}(L) = \pum{\CC_A}(L)\) holds.
 We say that an infinite sequence \((L_n)_n\) of languages over \(A\)
 {\em converges to \(L\) from inner ({\em \resp}from outer)} if
 \(L_n \subseteq L\) (\resp \(L_n \supseteq L\)) for each \(n\) and \(\lim_{n \rightarrow \infty}\cd(L_n) = \cd(L)\).
A language \(L\) is \(\CC_A\)-measurable if and only if
there exist two convergent sequences of languages in \(\CC_A\) to \(L\) from inner and outer, respectively.
\end{definition}

The following is an example of \(\gcom\)-measurable non-regular languages.
For more detailed examples of \(\CC\)-measurable/immeasurable languages for a subclass \(\CC\) of regular languages, see \cite{S2021,carathe,gcommeasure}.

\begin{example}[\cite{S2021}]\label{ex:dyck2}
 The language \(E = \{ w \in A^* \mid |w|_a = |w|_b \}\) over \(A = \{a,b\}\) is \(\gcom\)-measurable.
 For each \(k \geq 2\), the language \(L_k = \{ w \in A^* \mid |w|_a \equiv |w|_b \,\, (\!\!\!\mod k) \}\) is in \(\gcom\), because 
 \(\eta^{-1}(0) = L_k\) holds for the morphism \(\eta\colon A^* \rightarrow \modint{k}\) where \(\eta(a) = 1\) and \(\eta(b) = k-1\).
 Obviously, \(E \subseteq L_k\) holds and it follows from Lemma~\ref{lem:groupdensity} that \(\cd(L_k) = 1/k\) holds. Hence \(\cd(L_k)\) tends to zero if \(k\) tends to infinity.
\end{example}

\subsection{Local varieties and profinite equations}
For a family \(\CC_A\) of languages over \(A\), we denote by
\(\ext(\CC_A)\) (\resp \(\rext(\CC_A)\)) the class of all \(\CC_A\)-measurable languages (\resp \(\CC_A\)-measurable \emph{regular} languages) over \(A\).
A family of regular languages over \(A\) is called a \emph{local variety} (\cf \cite{Adamek}) over \(A\) if it is closed under 
 Boolean operations and left-and-right quotients.
All classes \(\gd_A, \sf_A, \gcom_A\), and \(\group_A\) are local varieties.
The notion of \(\CC_A\)-measurability is well-behaved on Boolean operations and quotients as the following lemma says.
\begin{lemma}[\cite{carathe}]\label{lem:closure}
The operator \(\ext\) is a closure,
 \ie, it satisfies the following three properties for each \(\CC_A \subseteq \CD_A \subseteq 2^{A^*}\):
 (extensive)  \(\CC_A \subseteq \ext(\CC_A)\),
 (monotone) \(\ext(\CC_A) \subseteq \ext(\CD_A)\) and
 (idempotent) \(\ext(\ext(\CC_A)) = \ext(\CC_A)\).
Moreover, 
if \(\CC_A\) is a local variety, then \(\ext(\CC_A)\) is also a local variety.
\end{lemma}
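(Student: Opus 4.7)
The plan is to verify the three closure-operator properties first and then the local-variety closure, saving the Carath\'eodory-style idempotency argument and the quotient case as the substantive steps. Extensiveness is immediate: for any $L \in \CC_A$ (with density defined, as is automatic for every language in a local variety of regular languages), the constant sequence $L, L, \ldots$ converges to $L$ from both inner and outer, so $L \in \ext(\CC_A)$. Monotonicity follows by noting that $\CC_A \subseteq \CD_A$ only enlarges the sets of approximators, yielding
\[
  \plm{\CC_A}(L) \leq \plm{\CD_A}(L) \leq \pum{\CD_A}(L) \leq \pum{\CC_A}(L),
\]
so that equality of the outer pair forces equality of the inner pair, hence $\ext(\CC_A) \subseteq \ext(\CD_A)$.

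The key conceptual step is idempotency. The inclusion $\ext(\CC_A) \subseteq \ext(\ext(\CC_A))$ is immediate from extensiveness applied to $\ext(\CC_A)$. For the reverse, given $L \in \ext(\ext(\CC_A))$ and $\epsilon > 0$, I would first pick $K_0 \subseteq L \subseteq M_0$ in $\ext(\CC_A)$ with $\cd(M_0 \setminus K_0) < \epsilon/3$, and then, using the $\CC_A$-measurability of both $K_0$ and $M_0$, further select $K_1 \subseteq K_0$ and $M_0 \subseteq M_1$ in $\CC_A$ with $\cd(K_0 \setminus K_1) < \epsilon/3$ and $\cd(M_1 \setminus M_0) < \epsilon/3$. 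Then $K_1 \subseteq L \subseteq M_1$, and finite additivity of $\cd$ (Lemma~\ref{lem:density}) yields $\cd(M_1 \setminus K_1) < \epsilon$; since $\epsilon$ is arbitrary, $L \in \ext(\CC_A)$. This two-level $\epsilon/3$-argument is essentially the Carath\'eodory extension step.

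For the local-variety part, assume $\CC_A$ is closed under Boolean operations and quotients. Complementation is handled by inverting $K \subseteq L \subseteq M$ to $\comp{M} \subseteq \comp{L} \subseteq \comp{K}$ and observing $\comp{K} \setminus \comp{M} = M \setminus K$. For union (hence intersection via De Morgan), combining pairs $K_i \subseteq L_i \subseteq M_i$ ($i \in \{1,2\}$) with individual errors $< \epsilon/2$ yields $K_1 \cup K_2 \subseteq L_1 \cup L_2 \subseteq M_1 \cup M_2$ whose outer difference is contained in $(M_1 \setminus K_1) \cup (M_2 \setminus K_2)$ of density $< \epsilon$ by finite subadditivity.

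I expect the main obstacle to be closure under left and right quotients, because density scales by $|A|^{-|v|}$ under pre- or post-concatenation by $v$ (Lemma~\ref{lem:density}(3)), so naively transferring an $\epsilon$-approximation of $L$ to an approximation of $v^{-1}L$ inflates the error. My fix is to pre-scale the initial tolerance: given $L \in \ext(\CC_A)$, $v \in A^*$, and $\epsilon > 0$, pick $K \subseteq L \subseteq M$ in $\CC_A$ with $\cd(M \setminus K) < \epsilon \cdot |A|^{-|v|}$. Then $v^{-1}K \subseteq v^{-1}L \subseteq v^{-1}M$ all lie in $\CC_A$ by quotient closure, and because $v \cdot v^{-1}(M \setminus K) \subseteq M \setminus K$, Lemma~\ref{lem:density}(3) gives $|A|^{-|v|} \cdot \cd(v^{-1}(M \setminus K)) \leq \cd(M \setminus K)$, so $\cd(v^{-1}M \setminus v^{-1}K) < \epsilon$; the right-quotient case is symmetric.
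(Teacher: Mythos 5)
Your proof is correct and takes the expected route: the paper itself gives no proof of this lemma (it is imported from \cite{carathe}), and your two-level \(\epsilon/3\) refinement for idempotency together with the \(\card{A}^{-|v|}\) pre-scaling for quotient closure is the standard Carath\'eodory-style argument one finds there. The only caveat, which you already flag, is that extensiveness requires every member of \(\CC_A\) to have a defined density; this holds in all of the paper's applications (local varieties of regular languages) though not for literally arbitrary \(\CC_A \subseteq 2^{A^*}\).
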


Let \(\fin_A\) be the class of all finite and co-finite languages over \(A\).
Hereafter, we only consider an alphabet \(A\) such that \(\card{A} \geq 2\) (because \(\sf_A = \gd_A = \fin_A\) and \(\group_A = \gcom_A\) holds if \(\card{A} = 1\)) and sometimes omit the subscript \(A\) for denoting these classes of languages.

The notion of local varieties and local pseudovarieties can be captured by an appropriate notion of ``equation'' called \emph{profinite equations} (\cf \cite{profinite}).
For a pair of words \(u\) and \(v\), we can define a distance between them by
\(d(u, v) = 2^{-r(u,v)}\)
where \(r(u,v) = \min\{ \lvert\CA\rvert \mid \CA \text{ is a deterministic automaton accepting one of } \{u, v\} \text{ but not the other} \}\).
Intuitively, \(r(u,v)\) represents the minimum number of states for separating \(u\) and \(v\) by using deterministic automata.
Then this function \(d\) is an ultrametric on \(A^*\), and the completion of the metric space \((A^*, d)\), denoted by \(\widehat{A^*}\), is called the set of profinite words over \(A\).
An element of \(\widehat{A^*}\) is a Cauchy sequence of (usual finite) words over \(A\) with respect to \(d\).
For example, the profinite word \(w^\omega\), associated with every word \(w\), can be defined as a limit of the sequence of words \((w^{n!})_{n \geq 1}\).
For any finite monoid \(M\) and morphism \(\eta \colon A^* \rightarrow M\), it is easy to see that \(\eta(w^{n!})\) is the idempotent power \(\eta(w)^{\omega}\) of \(\eta(w)\) for sufficiently large \(n\).
Hence, for a finite monoid \(M\), the equation \(w^{\omega+1} = w^{\omega}\) can be interpreted as \(x^{\omega+1} = x^\omega\) holds for any \(x \in M\), \ie, 
\(M\) is aperiodic (= \(\CH\)-trivial).
It is known that any local variety of languages can be captured by using profinite equations and vice versa.
We use the following theorem to prove Theorem~\ref{thm:group-robust} in Section~\ref{sec:group}.
\begin{theorem}[{\cite[Proposition 7.4]{Duality2008}}]\label{thm:duality}
  A set of regular languages over \(A\) is a Boolean quotienting algebra (\ie, local variety)
  if and only if it can be defined by a set of semigroup equations of the form \(u = v\), where \(u, v \in \widehat{A^*}\).
\end{theorem}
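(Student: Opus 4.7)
The plan is to establish this Eilenberg--Reiterman style correspondence via Stone duality, identifying local varieties of regular languages over \(A\) with closed two-sided congruences on the free profinite monoid \(\widehat{A^*}\). I would first dispatch the easy direction. Given a set of profinite equations \(E = \{(u_i, v_i)\}_{i \in I}\), let \(\CC(E)\) denote the class of regular languages \(L\) such that the unique continuous extension \(\widehat{\eta_L} \colon \widehat{A^*} \to M_L\) of the syntactic morphism satisfies \(\widehat{\eta_L}(u_i) = \widehat{\eta_L}(v_i)\) for every \(i\). Closure of \(\CC(E)\) under Boolean operations follows because the syntactic monoid of a Boolean combination divides the product of the syntactic monoids of the constituents, and divisors inherit equational satisfaction; closure under left and right quotients follows because the syntactic congruence of \(u^{-1} L v^{-1}\) is coarser than that of \(L\), so the extended morphism factors through \(\widehat{\eta_L}\) and the equations are preserved.

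For the harder converse, starting from a local variety \(\CC_A\), I would define an equivalence \(\sim_{\CC_A}\) on \(\widehat{A^*}\) by \(u \sim_{\CC_A} v\) iff \(\widehat{\eta_L}(u) = \widehat{\eta_L}(v)\) holds for every \(L \in \CC_A\). The next step is to verify that \(\sim_{\CC_A}\) is a closed two-sided congruence on the topological monoid \(\widehat{A^*}\): closedness comes from the finiteness of each syntactic monoid \(M_L\), whereas the two-sided multiplicative compatibility is the direct dual of closure under left and right quotients in \(\CC_A\). Taking \(E\) to be the set of equations arising from \(\sim_{\CC_A}\), the inclusion \(\CC_A \subseteq \CC(E)\) is tautological. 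For the reverse inclusion, note that any \(L \in \CC(E)\) has \(\sim_{\CC_A}\) contained in the kernel of \(\widehat{\eta_L}\), and hence \(L\) is recognised by the profinite monoid \(\widehat{A^*}/{\sim_{\CC_A}}\); the desired conclusion \(L \in \CC_A\) then follows from the Stone-dual correspondence, which asserts that the Boolean algebra of regular languages recognised by this quotient is precisely \(\CC_A\).

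The main obstacle is exactly this last Stone-dual step: showing that every regular language recognised by the profinite quotient \(\widehat{A^*}/{\sim_{\CC_A}}\) already lies in \(\CC_A\). A clean way to handle it is to write \(\widehat{A^*}/{\sim_{\CC_A}}\) as the inverse limit of its finite continuous quotients, observe that each such finite quotient is of the form \(M_K\) for some Boolean combination \(K\) of languages in \(\CC_A\) together with their quotients (using closure of \(\CC_A\)), and then use a compactness argument to exhaust all regular languages recognised by the inverse limit. Once this step is in place, the biconditional follows, and all equations can indeed be taken between profinite words in \(\widehat{A^*}\) rather than more general profinite terms.
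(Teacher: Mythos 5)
This statement is not proved in the paper at all: it is quoted verbatim as an external result, namely Proposition~7.4 of Gehrke--Grigorieff--Pin (the reference \cite{Duality2008}), so there is no in-paper argument to compare yours against. Judged on its own terms, your outline follows the standard duality/Reiterman-style route, and the easy direction (a class defined by profinite equations is a local variety) is essentially correct as you state it: Boolean combinations are handled because the syntactic monoid of a Boolean combination divides the product of the constituents' syntactic monoids, and quotients because the syntactic congruence of \(u^{-1}Lv^{-1}\) refines into a quotient of \(M_L\). One small misattribution: the fact that \(\sim_{\CC_A}\) is a closed two-sided congruence is automatic, since it is an intersection of kernels of continuous monoid morphisms \(\hat{\eta}_L\colon \widehat{A^*} \to M_L\) into finite discrete monoids; closure of \(\CC_A\) under quotients is not what makes it a congruence --- it is needed later, in the separation step.

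The genuine gap is exactly where you flag ``the main obstacle'', and your proposed repair does not close it. Invoking ``the Stone-dual correspondence, which asserts that the Boolean algebra of regular languages recognised by \(\widehat{A^*}/{\sim_{\CC_A}}\) is precisely \(\CC_A\)'' assumes the conclusion: that identity \emph{is} the hard half of the theorem, not a black box. Likewise, the claim that every finite continuous quotient of \(\widehat{A^*}/{\sim_{\CC_A}}\) is \(M_K\) for some Boolean combination \(K\) of quotients of members of \(\CC_A\) is equivalent to the inclusion \(\CC(E) \subseteq \CC_A\) you are trying to prove, so the ``observation'' is circular as written. What actually closes the gap is a clopen-separation plus double-compactness argument. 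Let \(L \in \CC(E)\) and let \(\overline{L}\) be its (clopen) closure in \(\widehat{A^*}\). For \(u \in \overline{L}\) and \(v \notin \overline{L}\) one has \(u \not\sim_{\CC_A} v\) (else \(\hat{\eta}_L(u) = \hat{\eta}_L(v)\)), so some \(L' \in \CC_A\) and \(x, y\) satisfy \(xuy \in \overline{L'} \centernot\iff xvy \in \overline{L'}\); closure of \(\CC_A\) under quotients and complement then yields \(K_{u,v} \in \CC_A\) with \(u \in \overline{K_{u,v}}\) and \(v \notin \overline{K_{u,v}}\). Compactness of the closed set \(\widehat{A^*} \setminus \overline{L}\) gives a finite intersection \(K_u \in \CC_A\) with \(u \in \overline{K_u} \subseteq \overline{L}\), and compactness of \(\overline{L}\) gives \(L\) as a finite union of such \(K_u\), hence \(L \in \CC_A\). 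Without this (or an equivalent) explicit separation argument, your proof of the converse direction is incomplete.
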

For any finite monoid \(M\) and morphism \(\eta\colon A^* \to M\),
we can associate its extension \(\hat{\eta}\) defined as \(\hat{\eta}((w_n)_{n \in \nat}) = \lim_{n \to \infty} \eta(w_n)\) where \((w_n)_{n \in \nat}\) is a Cauchy sequence in \(A^*\).
It is known that this \(\hat{\eta}\) is a unique \emph{uniformly continuous extension} of \(\eta\) (for details, see \eg, \cite[Chapter X, Proposition 2.10]{MPRI}).
A regular language \(L \subseteq A^*\) \emph{satisfies} an equation \(u = v\), where \(u, v \in \widehat{A^*}\),
if its syntactic morphism \(\eta\colon A^* \to M_L\) satisfies \(\hat{\eta}(u) = \hat{\eta}(v)\).

\subsection{Local varieties of finite nilpotent/solvable groups}
Here we introduce two kinds of restricted subclasses of group languages called \emph{nilpotent} and \emph{solvable} group languages.
These two subclasses of groups are natural extensions of the class of commutative groups and well-inverstigated in classical group theory (\cf \cite{Dummit-Foote-2004}).

\begin{definition}
  For a positive integer \(n\), define
  \begin{align*}
    \gnil_A^{\leq n} &\defeq \set{L \subseteq A^*}{\text{\(L\) is recognised by a finite nilpotent group of class \(\leq n\)}},\\
    \gsol_A^{\leq n} &\defeq \set{L \subseteq A^*}{\text{\(L\) is recognised by a finite solvable group of derived length \(\leq n\)}}.
  \end{align*}
\end{definition}

Recall that a group \(G\) is called a nilpotent group of class \(\leq n\) if and only if for any \(g_0, g_1, g_2, \dotsc, g_n \in G\),
the equation \([[\dotsm[[g_0, g_1], g_2]\dotsm], g_n] = 1_G\) holds,
where \([g, h] \defeq g h g^{-1} h^{-1}\) is the commutator of \(g\) and \(h\).
Hence, for example, \(\gnil_A^{\leq 2}\) is a local variety defined by the set of equations of the form
\((u v u^{\omega - 1} v^{\omega - 1}) w (u v u^{\omega - 1} v^{\omega - 1})^{\omega - 1} w^{\omega - 1} = \varepsilon\) for \(u, v, w \in A^*\).
Similarly, each \(\gnil_A^{\leq n}\) is also a local variety.
It is also known that each \(\gsol_A^{\leq n}\) is defined by a set of equations,
therefore \(\gsol_A^{\leq n}\) is also a local variety.

We notice that \(\gnil_A^{\leq 1} = \gsol_A^{\leq 1} = \gcom_A\).
An important fact is that these two kinds of subclasses form infinite strict hierarchies of group languages stated as follows (see Appendix for the detailed proof).
This fact is a key to prove Corollary~\ref{cor:strict-nilp-solv} in Section~\ref{sec:group}.
\begin{proposition}\label{prop:strict}
  Suppose \(\card{A} \geq 2\).
  For every \(n \geq 1\), we have \(\gnil_A^{\leq n} \subsetneq \gnil_A^{\leq n + 1}\) and \(\gsol_A^{\leq n} \subsetneq \gsol_A^{\leq n + 1}\).
\end{proposition}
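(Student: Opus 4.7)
The plan is to reduce Proposition~\ref{prop:strict} to a standard group-theoretic existence statement and then invoke it. For each \(n \geq 1\), it suffices to exhibit a regular language over \(A\) whose syntactic monoid is a finite nilpotent group of class exactly \(n+1\) (\resp a finite solvable group of derived length exactly \(n+1\)), since by the syntactic characterisation of the local varieties \(\gnil_A^{\leq m}\) and \(\gsol_A^{\leq m}\) recalled just after the definition, a language lies in these classes iff its syntactic monoid has the corresponding property. The key elementary step is the observation that, given a surjective morphism \(\eta\colon A^* \to G\) onto a finite group \(G\), the syntactic monoid of \(L = \eta^{-1}(1_G)\) is precisely \(G\): because \(\eta(xuy) = 1_G \iff \eta(u) = \eta(x)^{-1}\eta(y)^{-1}\), one has \(u \equiv_L v\) iff \(\eta(u) = \eta(v)\). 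Since \(\card{A} \geq 2\), any \(2\)-generated finite group can be realised as such a quotient by sending two distinguished letters of \(A\) to the generators and the remaining letters to \(1_G\).

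Thus the problem reduces to exhibiting, for each \(c \geq 2\), a finite \(2\)-generated nilpotent group of class exactly \(c\) and a finite \(2\)-generated solvable group of derived length exactly \(c\). For the nilpotent case, I would take a suitable finite quotient of the free nilpotent group of class \(c\) on two generators; more concretely, the quotient of the free group \(F_2\) by \(\gamma_{c+1}(F_2) \cdot F_2^p\) (where \(\gamma_i\) denotes the \(i\)-th term of the lower central series and \(F_2^p\) the subgroup generated by \(p\)-th powers) is a finite \(2\)-generated \(p\)-group whose Hall basis shows that the nilpotency class is still exactly \(c\). For the solvable case, the iterated wreath product \(\modint{2} \wr \modint{2} \wr \cdots \wr \modint{2}\) of \(c\) copies --- equivalently, the Sylow \(2\)-subgroup of the symmetric group \(S_{2^c}\) --- has derived length exactly \(c\) and is classically generated by two elements.

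The principal obstacle is purely group-theoretic: verifying that the chosen groups have nilpotency class (\resp derived length) \emph{exactly} \(c\), not merely at most \(c\), and that they are indeed \(2\)-generated. For the iterated wreath product the strict increase at each step comes from the standard fact that \(H \wr K\) with nontrivial abelian \(K\) has derived length exactly one more than that of \(H\), provable by a direct analysis of the derived series. For the nilpotent construction, the Hall (equivalently, Magnus) basis theorem for free nilpotent groups guarantees that the top quotient \(\gamma_c / \gamma_{c+1}\) remains nontrivial after the \(p\)-power identification, so the class does not collapse. Once these two classical group-theoretic facts are in place, the languages \(L_N = \eta_N^{-1}(1)\) and \(L_S = \eta_S^{-1}(1)\) obtained from the corresponding morphisms \(\eta_N, \eta_S\colon A^* \to G\) witness the strict inclusions \(\gnil_A^{\leq n} \subsetneq \gnil_A^{\leq n+1}\) and \(\gsol_A^{\leq n} \subsetneq \gsol_A^{\leq n+1}\).
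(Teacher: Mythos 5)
Your overall reduction is the same as the paper's: pass to the word-problem language \(\eta^{-1}(1_G)\), whose syntactic monoid is \(G\) itself, so that everything comes down to exhibiting, for each \(c\), a \emph{two-generated} finite nilpotent group of class exactly \(c\) and a two-generated finite solvable group of derived length exactly \(c\) (two-generation is essential, since the proposition must cover \(\card{A} = 2\)). The gap lies in both of your concrete group-theoretic witnesses. For the solvable case, the iterated wreath product \(\modint{2} \wr \dotsb \wr \modint{2}\) of \(c\) copies is \emph{not} two-generated once \(c \geq 3\): since the abelianisation of \(A \wr B\) with \(B\) acting regularly is \(A^{\mathrm{ab}} \times B^{\mathrm{ab}}\), the abelianisation of the \(c\)-fold product is \((\modint{2})^{c}\), so by the Burnside basis theorem it needs exactly \(c\) generators. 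Its derived length is indeed \(c\), but it is not a quotient of \(A^*\) when \(\card{A} = 2\), so it cannot witness the strictness. This is exactly the difficulty the paper works around via the Neumann--Neumann embedding theorem: embed an arbitrary finite solvable group of derived length \(n\) into a two-generated one of derived length at most \(n + 2\), then cut back down to length exactly \(n\) by passing to a factor group.

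For the nilpotent case, the quotient \(F_2 / \bigl(\gamma_{c+1}(F_2) \cdot F_2^{p}\bigr)\) does not have class exactly \(c\) for an arbitrary prime \(p\): for \(p = 2\) it has exponent \(2\) and is therefore elementary abelian of class \(1\), and for any fixed \(p\) the class of a two-generated group of exponent \(p\) is bounded, so the construction collapses once \(c\) is large relative to \(p\). The Hall basis only controls the torsion-free free nilpotent group; after imposing \(x^{p} = 1\) you must at least take \(p > c\) and then still argue that \(\gamma_c\) survives. This half is repairable, but the paper's witness is much lighter: the dihedral group \(D_{2^{c+1}}\) is two-generated, finite, and nilpotent of class exactly \(c\). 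I would replace both witnesses accordingly (dihedral \(2\)-groups for the nilpotent hierarchy, and a certified two-generated solvable family, e.g.\ via Neumann--Neumann, for the solvable one); the remainder of your argument --- the computation of the syntactic monoid of \(\eta^{-1}(1_G)\) and the closure of the two classes under division --- is correct and agrees with the paper.
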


 \section{Measuring Power of Star-Free Languages}\label{sec:sf}
In this section, we give a proof of the first main theorem, which gives a decidable characterisation of \(\sf\)-measurability for regular languages.
\begin{theorem}[S.-Y.]\label{thm:sf}
The class of star-free languages and the class of generalised definite languages have the same measuring power, \ie, \(\ext(\sf) = \ext(\gd)\).
\end{theorem}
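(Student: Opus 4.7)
The inclusion $\ext(\gd) \subseteq \ext(\sf)$ follows at once from $\gd \subseteq \sf$ and the monotonicity in Lemma~\ref{lem:closure}. For the reverse direction, the plan is to reduce everything to the single statement $\sf \subseteq \ext(\gd)$; combined with idempotency of $\ext$ this yields $\ext(\sf) \subseteq \ext(\ext(\gd)) = \ext(\gd)$.

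To show any $L \in \sf$ is $\gd$-measurable, I would fix the syntactic morphism $\eta \colon A^* \to M$ of $L$ (so $M$ is finite $\CH$-trivial and $L = \eta^{-1}(S)$ with $S = \eta(L)$) and exploit the kernel $K$ of $M$. Two ingredients drive the argument. First, Remark~\ref{rem:kernel} gives $\cd(\eta^{-1}(K)) = 1$, so the density of $L$ is concentrated on $\eta^{-1}(S \cap K)$. Second, a classical Rees--Suschkewitsch-type fact says that a finite simple $\CH$-trivial semigroup is a rectangular band, so $K \cong \Lambda \times P$ with multiplication $(\lambda, \rho)(\lambda', \rho') = (\lambda, \rho')$. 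From this I would derive the key ``prefix--suffix determination'' calculation: whenever $w = u y z$ with $\eta(u), \eta(z) \in K$, writing $\eta(u) = (\lambda, *)$ and $\eta(z) = (*, \rho)$, one has $\eta(w) = (\lambda, \rho)$, irrespective of the middle factor $y$. The point is that $K$ is a two-sided ideal and right-multiplication by any element of $M$ preserves the left component of an element of $K$, and symmetrically on the right.

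The $\gd$-approximations of $L$ are then essentially forced. For each $(\lambda, \rho) \in K$ and each $N \in \nat$, I would let
\[
 U_N(\lambda) \defeq \set{u \in A^N}{\eta(u) \in K \text{ with left component } \lambda},\quad
 V_N(\rho) \defeq \set{z \in A^N}{\eta(z) \in K \text{ with right component } \rho},
\]
and $G_N^{(\lambda, \rho)} \defeq U_N(\lambda) \cdot A^* \cdot V_N(\rho)$, which is a finite union of sets $u A^* z$ and hence lies in $\gd$. The calculation above gives $G_N^{(\lambda, \rho)} \subseteq \eta^{-1}(\lambda, \rho)$, while the only way a word of length $\geq 2N$ in $\eta^{-1}(\lambda, \rho)$ can escape $G_N^{(\lambda, \rho)}$ is for its length-$N$ prefix or suffix to map into $M \setminus K$. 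Setting $L_N \defeq \bigcup_{(\lambda, \rho) \in S \cap K} G_N^{(\lambda, \rho)} \in \gd$ therefore yields inner approximations $L_N \subseteq L$ with $\cd(L) - \cd(L_N)$ bounded by $2 \cdot |A^N \setminus \eta^{-1}(K)|/|A|^N$; an outer approximation comes for free by running the same recipe on the star-free complement $\comp{L}$ and complementing, using that $\gd$ is Boolean closed.

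I expect the main obstacle to be twofold. On the structural side, the rectangular-band computation $\eta(u y z) = (\lambda, \rho)$ is not quite immediate: it rests on the nonobvious preservation of left (resp.\ right) components of elements of $K$ under right (resp.\ left) multiplication by arbitrary elements of $M$, and getting this cleanly will require a small case analysis on whether $\eta(y)$ itself lies in $K$. On the quantitative side, I need the length-wise statement $|A^N \cap \eta^{-1}(K)|/|A|^N \to 1$, which is a strengthening of the Ces\`aro identity $\cd(\eta^{-1}(K)) = 1$; this can be extracted from Lemma~\ref{lem:zero}, since each $\eta^{-1}(m)$ with $m \notin K$ admits a forbidden word and hence has exponentially decaying length-wise count by the standard factor-avoidance estimates for regular languages.
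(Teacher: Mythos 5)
Your proposal is correct and takes essentially the same route as the paper: reduce to \(\sf \subseteq \ext(\gd)\) via monotonicity and idempotency of \(\ext\), then build inner \(\gd\)-approximations of each kernel fibre from prefix-and-suffix tests of growing length, with the error controlled by the length-wise infinite monkey theorem and the outer approximations obtained by complementation. Your rectangular-band coordinates \((\lambda,\rho)\) are just an explicit form of the paper's identity \(\CH_m = mM \cap Mm = \{m\}\) for \(m\) in the kernel, and your sets \(U_N(\lambda)\,A^*\,V_N(\rho)\) coincide (up to finitely many short words) with the intersections \(L_\ell \cap L'_\ell\) of prefix-test and suffix-test languages used in the paper's Lemma~\ref{lem:kernelHtrivial}.
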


To prove Theorem~\ref{thm:sf}, we require some lemmata, where Lemma~\ref{lem:kernelHtrivial} is the key of our proof.
\begin{lemma}[\cite{gdmeasure}]\label{lem:zeroone}
Every regular language whose density is zero or one is \(\gd\)-measurable.
\end{lemma}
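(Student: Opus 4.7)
The strategy is to reduce to the null case by complementation, then build explicit generalised definite over-approximations of arbitrarily small density from a forbidden word.

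\emph{Reduction.} Since $\gd$ is closed under complement and $\cd(\comp{L}) = 1 - \cd(L)$ by Lemma~\ref{lem:density}(1), it suffices to handle the case $\cd(L) = 0$: a $\gd$ outer-approximation sequence of a null regular language $\comp{L}$ yields, by complementation, a $\gd$ inner-approximation sequence of a co-null $L$, while $A^*$ itself serves as an outer approximation. In the null case, the inner approximation is trivially $K_n = \varnothing \in \gd$.

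\emph{Outer approximation.} For each $n \geq 1$ we set
\[
  F_n \defeq \set{(u, v) \in A^n \times A^n}{u A^* v \cap L \neq \varnothing}, \qquad M_n \defeq A^{<2n} \cup \bigcup_{(u, v) \in F_n} u A^* v.
\]
Then $M_n$ is generalised definite, and splitting any $w \in L$ with $|w| \geq 2n$ as $w = uxv$ with $|u| = |v| = n$ shows $(u, v) \in F_n$ and $w \in uA^*v$, whence $L \subseteq M_n$. By Lemma~\ref{lem:density}(3), $\cd(M_n) \leq |F_n|/|A|^{2n}$.

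\emph{Key estimate.} It remains to show $|F_n|/|A|^{2n} \to 0$. Since $L$ is regular and null, Lemma~\ref{lem:zero} yields a forbidden word $w$ of $L$; any $(u, v) \in F_n$ witnesses some $uxv \in L$, which must avoid $w$, so in particular both $u$ and $v$ avoid $w$ as factors. Thus $F_n \subseteq C_n \times C_n$ where $C_n \defeq \overline{A^* w A^*} \cap A^n$, giving $|F_n|/|A|^{2n} \leq (|C_n|/|A|^n)^2$. To show $|C_n|/|A|^n \to 0$, we would use an independent-positions argument: set $k = |w|$, and observe that for a uniformly random word of length $n$, the length-$k$ factors starting at positions $0, k, 2k, \ldots, (\lfloor n/k\rfloor - 1) k$ are mutually independent uniform length-$k$ words, each equalling $w$ with probability $|A|^{-k}$; hence the probability that none equals $w$ --- a necessary condition for avoiding $w$ --- is at most $(1 - |A|^{-k})^{\lfloor n/k \rfloor}$, so $|C_n|/|A|^n \leq (1 - |A|^{-k})^{\lfloor n/k \rfloor} \to 0$.

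\emph{Main obstacle.} The delicate step is the pointwise convergence $|C_n|/|A|^n \to 0$: the density definition only furnishes the Cesàro average $\cd(\overline{A^* w A^*}) = 0$ (via the infinite monkey theorem of Example~\ref{ex:density}), which does not in general control the pointwise limit of a bounded sequence. The independent-positions argument above yields exponential decay directly and avoids any appeal to growth-rate estimates for $w$-avoiding words.
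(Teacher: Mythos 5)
Your proof is correct; note that the paper itself does not prove this lemma (it is imported from \cite{gdmeasure}), so there is no in-text argument to compare against, but your derivation from Lemma~\ref{lem:zero} is sound and self-contained. A few comparative remarks. The two-sided construction with pairs \((u,v) \in A^n \times A^n\) is more than is needed: since every prefix of a \(w\)-avoiding word is itself \(w\)-avoiding, the one-sided outer approximation \(M_n = A^{<n} \cup \bigcup\set{uA^*}{u \in A^n,\ uA^* \cap L \neq \varnothing}\) already works, with \(\cd(M_n) \leq \card{C_n}/\card{A^n}\); this prefix-only shape is exactly the form of the approximants the paper builds inside the proof of Lemma~\ref{lem:kernelHtrivial}, so your argument is a (slightly heavier) variant of the same idea. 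Your caution about Ces\`aro versus pointwise convergence is well placed in principle, but the version of the infinite monkey theorem the paper actually states and uses (Example~\ref{ex:density}, and again in Lemma~\ref{lem:kernelHtrivial}) is already the pointwise one, namely \(\card{A^n \cap A^* w A^*}/\card{A^n} \to 1\); your disjoint-blocks estimate is the standard proof of that fact and cleanly discharges it rather than assuming it. Finally, the step \(\cd(M_n) \leq \card{F_n}/\card{A^{2n}}\) needs finite subadditivity of \(\cd\) on top of Lemma~\ref{lem:density}(3); this is unproblematic here because all languages involved are regular, hence have well-defined densities, but it is worth saying explicitly.
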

\begin{lemma}\label{lem:fiberinner}
Let \(\CC \subseteq \reg_A\) be a class of regular languages closed under Boolean operations, \(M\) be a finite monoid, \(\alpha\colon A^* \rightarrow M\) be a monoid morphism, and \(L \subseteq A^*\) be a language such that \(L = \alpha^{-1}(\alpha(L))\).
If there is a convergent sequence of languages in \(\CC\) to \(\alpha^{-1}(m)\) from inner for every \(m \in M\), then \(L\) is \(\CC\)-measurable.
\end{lemma}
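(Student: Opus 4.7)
The plan is to exploit the fact that the morphism $\alpha$ partitions $A^*$ into finitely many regular fibers, together with closure of $\CC$ under Boolean operations. Since $L = \alpha^{-1}(\alpha(L))$, we have the finite disjoint decompositions
\[
L \;=\; \bigcup_{m \in \alpha(L)} \alpha^{-1}(m),
\qquad
\comp{L} \;=\; \bigcup_{m \in M \setminus \alpha(L)} \alpha^{-1}(m).
\]
Each fiber $\alpha^{-1}(m)$ is regular (as $M$ is finite), hence has a well-defined rational density, and by finite additivity (Lemma~\ref{lem:density}) the fiber densities sum to $\cd(A^*) = 1$. In particular $\cd(L)$ exists.

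For the inner approximation, I would use the hypothesis to select, for each $m \in \alpha(L)$, a sequence $(K_n^m)_n$ in $\CC$ converging to $\alpha^{-1}(m)$ from inner, and set $K_n \defeq \bigcup_{m \in \alpha(L)} K_n^m$. Then $K_n \in \CC$ (finite union, and $\CC$ is closed under Boolean operations), $K_n \subseteq L$, and since the $K_n^m$ lie in pairwise disjoint fibers, finite additivity yields
\[
\cd(K_n) \;=\; \sum_{m \in \alpha(L)} \cd(K_n^m) \;\xrightarrow[n \to \infty]{}\; \sum_{m \in \alpha(L)} \cd(\alpha^{-1}(m)) \;=\; \cd(L).
\]

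For the outer approximation, I would apply the same construction to the complement: pick, for each $m \in M \setminus \alpha(L)$, an inner convergent sequence $(J_n^m)_n$ in $\CC$ to $\alpha^{-1}(m)$, and define $N_n \defeq A^* \setminus \bigcup_{m \in M \setminus \alpha(L)} J_n^m$. Using closure under complement we have $N_n \in \CC$, clearly $L \subseteq N_n$, and by the same accounting $\cd(N_n) \to 1 - \cd(\comp{L}) = \cd(L)$. Hence $\plm{\CC}(L) = \pum{\CC}(L) = \cd(L)$, proving $\CC$-measurability of $L$.

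There is no genuine obstacle here; the argument is essentially a bookkeeping computation using finite additivity of $\cd$ and the Boolean closure of $\CC$. The only point worth a careful check is that one really only needs inner-convergent sequences for every fiber: outer convergence for $L$ is recovered \emph{for free} by taking complements of inner approximations to the fibers outside $\alpha(L)$, which is what makes the lemma convenient for later use.
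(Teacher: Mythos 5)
Your proposal is correct and follows essentially the same route as the paper's proof: take the union of the inner approximations of the fibers in $\alpha(L)$ for the inner sequence, and the complement of the union of the inner approximations of the remaining fibers (equivalently, the intersection of their complements) for the outer sequence. The paper states the density bookkeeping more tersely, but the argument is the same.
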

\begin{proof}
For each \(m \in M\), let \((L_{m,n})_n\) be a convergent sequence of languages in \(\CC\) to \(\alpha^{-1}(m)\) from inner.
Since \(\CC\) is closed under finite union,
the sequence \((\bigcup_{m \in \alpha(L)} L_{m,n})_n\) is a convergent sequence of languages in \(\CC\) to \(L\) from inner.
Since \(\CC\) is closed under finite intersection and complementation,
the sequence \((\bigcap_{m \not\in \alpha(L)} \comp{L_{m,n}})_n\) is a convergent sequence of languages in \(\CC\) to \(L\) from outer.
\end{proof}
\begin{lemma}\label{lem:kernelHtrivial}
Let \(L\) be a regular language and \(M = M_L\) be its syntactic monoid.
If every \(\CH\)-class in the kernel of \(M\) is trivial, then
\(L\) is \(\gd\)-measurable.
\end{lemma}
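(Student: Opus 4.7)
The plan is to construct, for each fibre $\eta^{-1}(m)$ of the syntactic morphism $\eta \colon A^* \to M$, an inner-convergent sequence in $\gd_A$, and then invoke Lemma~\ref{lem:fiberinner} with $\alpha = \eta$ (noting that $L = \eta^{-1}(\eta(L))$ since $\eta$ is syntactic, and that $\gd_A$ is a local variety hence closed under Boolean operations). The case $m \in M \setminus K$ is immediate from Remark~\ref{rem:kernel}: there $\cd(\eta^{-1}(m)) = 0$, so the constant sequence $\varnothing$ converges to $\eta^{-1}(m)$ from inner. The substantive work is the case $m \in K$.

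First, I would exploit the hypothesis to pin down the algebraic structure of $K$. Since $M$ is finite, $K$ is completely simple, so by the Rees theorem $K \cong \mathcal{M}(G; I, \Lambda; P)$ for some group $G$; triviality of every \(\CH\)-class of $K$ forces $G = \{1\}$, making $K$ a rectangular band $I \times \Lambda$ under $(i,\lambda)(j,\mu) = (i,\mu)$. The key auxiliary step is a \emph{coordinate-rigidity lemma}: for any $m \in M$ and $(i,\lambda), (j,\nu) \in K$, the products $(i,\lambda) m$ and $m(j,\nu)$ lie in $K$ with first coordinate $i$ and second coordinate $\nu$ respectively. To prove it, write $(i,\lambda) m = (i', \mu)$ and $m(j,\nu) = (j', \nu')$ and expand $(i,\lambda)\,m\,(j,\nu)$ in two ways by associativity and the rectangular-band rule, obtaining $(i', \nu) = (i, \nu')$ and thus $i' = i$, $\nu' = \nu$. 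A direct consequence: for every $w \in \eta^{-1}(K)$ with $\eta(w) = (i_w, \lambda_w)$, the coordinate $i_w$ equals the first coordinate of $\eta(u)$ for \emph{any} prefix $u$ of $w$ with $\eta(u) \in K$, and symmetrically for $\lambda_w$.

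With this in hand, for each $(i,\lambda) \in K$ and $N \geq 1$ I would set
\[
  L_{i,\lambda}^N \;\defeq\; \Bigl(\bigcup_{u \in P_N^i} u A^*\Bigr) \cap \Bigl(\bigcup_{v \in Q_N^\lambda} A^* v\Bigr),
\]
where $P_N^i = \set{u \in A^N}{\eta(u) \in K \text{ with first coordinate } i}$ and $Q_N^\lambda = \set{v \in A^N}{\eta(v) \in K \text{ with second coordinate } \lambda}$. Each such language lies in $\gd_A$, and coordinate rigidity gives $L_{i,\lambda}^N \subseteq \eta^{-1}(i, \lambda)$. Conversely, if $w \in \eta^{-1}(i, \lambda)$ has length at least $2N$ and both its length-$N$ prefix and length-$N$ suffix lie in $\eta^{-1}(K)$, coordinate rigidity forces $w \in L_{i,\lambda}^N$. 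Hence
\[
  \cd\bigl(\eta^{-1}(i,\lambda) \setminus L_{i,\lambda}^N\bigr) \;\leq\; 2 \cdot \frac{\card{\eta^{-1}(M \setminus K) \cap A^N}}{\card{A^N}},
\]
and since $\eta^{-1}(M \setminus K)$ is a regular null language having a forbidden word by Lemma~\ref{lem:zero}, the standard growth-rate bound for factor-avoiding regular languages forces the right-hand side to decay exponentially in $N$. Therefore $(L_{i,\lambda}^N)_N$ is an inner-convergent $\gd_A$-sequence for $\eta^{-1}(i,\lambda)$, and Lemma~\ref{lem:fiberinner} delivers the $\gd$-measurability of $L$.

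The main obstacle will be the coordinate-rigidity calculation on the rectangular-band kernel: once $K$ is identified with a rectangular band the computation itself is short, but extracting the clean statement that a single prefix (resp.\ suffix) in $\eta^{-1}(K)$ already determines the first (resp.\ second) coordinate of $\eta(w)$ is what enables the $\gd$ shape of the approximation. A secondary subtlety is the passage from the Ces\`aro statement $\cd(\eta^{-1}(M \setminus K)) = 0$ to the pointwise decay of $\card{\eta^{-1}(M\setminus K) \cap A^N}/\card{A^N}$, for which the forbidden-word characterisation of regular null languages (Lemma~\ref{lem:zero}) is essential.
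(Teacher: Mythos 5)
Your proposal is correct and follows essentially the same route as the paper: reduce to inner approximation of each fibre via Lemma~\ref{lem:fiberinner}, approximate the kernel fibres by prefix-test/suffix-test languages of a fixed length, and bound the error by the (vanishing) proportion of length-$N$ words avoiding the kernel. Your ``coordinate rigidity'' on the rectangular band is exactly the paper's observation that $\eta(uA^*)=\eta(u)M$ is a minimal right ideal once $\eta(u)$ lies in the kernel (the first and second coordinates being the $\CR$- and $\CL$-classes of $K$), so the two arguments differ only in presentation.
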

\begin{proof}
Let \(\eta\colon A^* \rightarrow M\) be the syntactic morphism of \(L\).
Let \(K\) be the kernel of \(M\).
If \(m \not\in K\), then the infinite monkey theorem implies 
\(\d(\eta^{-1}(m)) = 0\) hence \(\eta^{-1}(m)\) is \(\gd\)-measurable by Lemma~\ref{lem:zeroone}.
Let \(m \in M\) be an element in \(K\).
In this case, we have \(\CH_m = mM \cap Mm\).
Since \(K\) is \(\CH\)-trivial by the assumption, this implies
\(\CH_m = mM \cap Mm = \{m\}\).
For the language \(L_m \defeq \eta^{-1}(m)\), we have
\[
 L_m = \eta^{-1}(mM \cap Mm)
 = \eta^{-1}(mM) \cap \eta^{-1}(Mm).
\]
Let \(w \in \eta^{-1}(K)\).
If a word \(u \in A^*\) contains \(w\) as a factor (\ie, \(u \in A^* w A^*\)), then \(\eta(u) \in K\) and \(\eta(uA^*) = \eta(u) M\) is a minimal right ideal. 
Notice that the set of all minimal right ideals is a partition of the kernel 
\(
 K = \bigsqcup_{n \in K / \CR} nM
\) where \(\bigsqcup\) denotes the disjoint union.
For each \(\ell \geq 0\), we define a generalised definite language \(L_\ell\) as
\(
 L_\ell = \bigsqcup\set{uA^*}{u \in A^\ell, \eta(uA^*) \subseteq mM}.
\)
The inclusion \(L_\ell \subseteq \eta^{-1}(mM)\) is obvious.
We can show that \(L_\ell\) converges to \(\eta^{-1}(mM)\) from inner as follows.
\begin{align*}
& \,\, \d\left(\eta^{-1}(mM) \setminus L_\ell\right)\\
= & \,\,
\d(\eta^{-1}(mM)) - \d\left(\bigsqcup\{ uA^* \mid u \in A^\ell, \eta(uA^*) \subseteq mM\}\right)\\
\leq & \,\,
\sum_{n \in K / \CR} \left( \d(\eta^{-1}(nM)) - \d\left(\bigsqcup\{ uA^* \mid u \in A^\ell, \eta(uA^*) \subseteq nM\}\right) \right)\\
= & \,\,
\sum_{n \in K / \CR} \d(\eta^{-1}(nM))
 - \sum_{n \in K / \CR} \d\left(\bigsqcup\{ uA^* \mid u \in A^\ell, \eta(uA^*) \subseteq nM\}\right)\\
= & \,\,
\d(\eta^{-1}(K))
 - \d\left(\bigsqcup\{ uA^* \mid u \in A^\ell, \eta(uA^*) \subseteq nM \text{ for some } n \in K/\CR\}\right)\\
\leq & \,\,
1 - \d\left(\bigsqcup\{ uA^* \mid u \in A^\ell \cap A^* w A^*\}\right)
\end{align*}
where the last term tends to zero as \(\ell\) tends to infinity, since 
\[
\d\left(\bigsqcup\{ uA^* \mid u \in A^\ell \cap A^* w A^*\}\right)
= \frac{\card{A^\ell \cap A^* w A^*}}{\card{A^\ell}}
\]
tends to one as \(\ell\) tends to infinity by the infinite monkey theorem.
By the same argument, there is a convergent sequence \((L'_\ell)_{\ell \geq 0}\) of generalised definite languages to \(\eta^{-1}(Mm)\) from inner.
This means that the sequence \((L_\ell \cap L'_\ell)_{\ell \geq 0}\)
of generalised definite languages converges to \(\eta^{-1}(mM) \cap \eta^{-1}(Mm) = L_m\) from inner.
We showed that, for every \(m \in M\),
there is a convergent sequence of generalised definite languages to \(L_m = \eta^{-1}(m)\) from inner.
Hence \(L\) is \(\gd\)-measurable by Lemma~\ref{lem:fiberinner} because \(\gd\) is closed under Boolean operations.
\end{proof}

\begin{proof}[Proof of Theorem~\ref{thm:sf}]
Since we have \(\gd \subseteq \sf\), it suffices to show that
\(\sf \subseteq \ext(\gd)\), which implies 
\(\ext(\gd) \subseteq \ext(\sf) \subseteq \ext(\ext(\gd)) = \ext(\gd)\) by Lemma~\ref{lem:closure}.
The syntactic monoid of any star-free language is \(\CH\)-trivial, hence \(\sf \subseteq \ext(\gd)\) holds by Lemma~\ref{lem:kernelHtrivial}.
\end{proof}

Actually, the converse direction of Lemma~\ref{lem:kernelHtrivial} holds. The following theorem gives an algebraic characterisation of \(\gd\)-measurable (\(= \) \(\sf\)-measurable) regular languages.
This theorem can be considered as a natural extension of Sch\"utzenberger's theorem stating that \(L\) is star-free if and only if its syntactic monoid is \(\CH\)-trivial.
\begin{theorem}[S.]\label{thm:algsf}
Let \(L\) be a regular language and \(M = M_L\) be its syntactic monoid.
Then \(L\) is \(\gd\)-measurable if and only if the kernel of \(M\) is \(\CH\)-trivial, \ie, every \(\CH\)-class in the kernel of \(M\) is trivial.
\end{theorem}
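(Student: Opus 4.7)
The forward direction is Lemma~\ref{lem:kernelHtrivial}, so I plan to prove the converse: if the kernel $K$ of $M \defeq M_L$ has a non-trivial $\CH$-class, then $L$ is not $\gd$-measurable. Since $K$ is completely simple, any non-trivial $\CH$-class $H$ of $K$ contains an idempotent $e$ and forms a group under the multiplication of $M$; pick $h \in H \setminus \{e\}$. By the syntactic property of $M$ there exist $x, y \in M$ with, say, $xey \in \eta(L)$ and $xhy \notin \eta(L)$, where $\eta\colon A^* \to M$ is the syntactic morphism. A short calculation inside the completely simple kernel (exploiting the invertibility of $h$ in the group $H$ to deduce $xe \CH xh$ in $M$, then using the idempotent in the $\CH$-class of $xe$ to pin the $\CR$-coordinate combined with the preservation of $\CL$-classes under right multiplication) shows $xey \CH xhy$ in $M$. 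Hence $m_1 \defeq xey$ and $m_2 \defeq xhy$ are distinct $\CH$-related elements of $K$, and the $\CH$-class $H_{m_1} = H_{m_2}$ is \emph{partial}: it meets both $\eta(L)$ and its complement.

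I then plan to bound the inner $\gd$-density of $L$. Every $K' \in \gd$ agrees, on all sufficiently long words, with a disjoint union $\bigsqcup_{(u,v) \in F} uA^*v$ for some finite $F \subseteq A^k \times A^k$, whence $\cd(K') = \card{F}/\card{A}^{2k}$; and $K' \subseteq L$ forces $\eta(u) M \eta(v) \subseteq \eta(L)$ for every $(u,v) \in F$. The technical heart of the proof, and the main obstacle I anticipate, is a key algebraic fact proved via the Rees matrix representation of the completely simple kernel: for any $p, q \in K$, the set $p M q$ is exactly one $\CH$-class of $K$. Writing $p = e_p p$ with $e_p$ the idempotent in the group $\CH$-class of $p$ gives $pm = e_p(pm) \in e_p K = R_{e_p}$, pinning the $\CR$-coordinate of $pmq$; the dual computation pins the $\CL$-coordinate; and letting $m$ range over $K$ makes the middle group coordinate sweep all of $G$. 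Since $\cd(\eta^{-1}(K)) = 1$ by Remark~\ref{rem:kernel}, contributions from pairs $(u,v)$ with $\eta(u) \notin K$ or $\eta(v) \notin K$ are asymptotically negligible.

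Combining these and letting $k \to \infty$, the optimal inner $\gd$-density of $L$ converges to $\sum_{H'} \cd(\eta^{-1}(H'))$, where the sum ranges over $\CH$-classes $H'$ of $K$ entirely contained in $\eta(L)$. Since the partial $\CH$-class $H_{m_1}$ contributes $\cd(\eta^{-1}(m_1)) > 0$ (Remark~\ref{rem:kernel}) to $\cd(L) = \sum_{m \in \eta(L) \cap K} \cd(\eta^{-1}(m))$ but nothing to the previous sum, we obtain $\plm{\gd}(L) < \cd(L)$. Applying the same argument to $\overline{L}$ (whose syntactic monoid is again $M$ and whose partial $\CH$-class is $H_{m_1}$) yields $\pum{\gd}(L) = 1 - \plm{\gd}(\overline{L}) > \cd(L)$, so $L$ is not $\gd$-measurable.
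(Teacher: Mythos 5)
Your proposal is correct and follows essentially the same route as the paper's proof: argue by contraposition, use the disjunctive property of \(\eta(L)\) in the syntactic monoid to exhibit a partial \(\CH\)-class inside the kernel, and show that any \(uA^*v\) contained in \(L\) must avoid the positive part of that class, so that \(\plm{\gd}(L) < \cd(L) \le \pum{\gd}(L)\). The only substantive difference is presentational: where the paper constructs an explicit witness word \(u(wvu)^{k-1}wvw_xuwv\) landing on the excluded \(\CH\)-equivalent element, you encapsulate the same phenomenon in the Rees-matrix fact that \(pMq\) is a single \(\CH\)-class of the kernel for \(p,q\) in the kernel (and you compute the exact value of \(\plm{\gd}(L)\), which is more than the immeasurability argument requires).
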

\begin{proof}
We show the ``only if'' part by contraposition.
Let \(K\) be the kernel of \(M\).
Suppose that some \(\CH\)-class in the kernel of \(M\) is not trivial.
This implies that, there is some \(k \geq 2\) such that \(\card{\CH_m} = k\) for every \(m \in K\) since every \(\CH\)-class in \(K\) is a group and all these groups are isomorphic (\cf Chapter V, Section 4.1 of \cite{MPRI}).
If \(\d(L) = 0\), then \(K\) should be trivial by Lemma~\ref{lem:zero} (the syntactic image of any forbidden word is \(0\) the zero element and hence \(K = \{0\}\)).
Hence we may assume that \(\d(L) > 0\).
Let \(\eta\colon A^* \rightarrow M\) be the syntactic morphism of \(L\) and let \(P \defeq \eta(L)\).
Because \(\d(L) > 0\) and the infinite monkey theorem, \(P\) and \(K\) have a non-empty intersection and hence we can pick some element \(m \in P \cap K\).
By the assumption there is some \(n \in \CH_m\) such that \(m \neq n\).
Since \(M\) is the syntactic monoid of \(L\), it is known that 
\(P\) is \emph{disjunctive subset} of \(M\) (\cf Proposition 10.3.3 of \cite{Lawson}), \ie,
for each \(a, b \in M\), if 
\(cad \in P \Leftrightarrow cbd \in P\) for any pair \(c,d \in M\), then \(a = b\).
Hence \(m \neq n\) implies \(m' = cmd \in P\) but \(n' = cnd \notin P\) for some \(c, d \in M\).
Now we show that \(u A^* v \subseteq L\) implies
\(u A^* v \cap \eta^{-1}(m') = \emptyset\) for any pair of words \(u\) and \(v\).
This concludes that, every generalised definite subset \(K\) of \(L\) satisfies \(\d(K) \leq \d(L) - \d(\eta^{-1}(m')) < \d(L)\) (notice that \(\d(\eta^{-1}(m')) > 0\) as stated in Remark~\ref{rem:kernel}), \ie, \(L\) is \(\gd\)-immeasurable.

Assume \(u A^* v \cap \eta^{-1}(m') \neq \emptyset \) holds for some pair of words \(u\) and \(v\).
Then \(\eta(uwv) = m'\) for some \(w \in A^*\).
Since \(m' \mathrel{\CH} n'\), there is some \(x \in M\) such that \(xm' = n'\).
Let \(w_x \in \eta^{-1}(x)\).
Then the word \(u (wvu)^{k-1} w v w_x uwv \in u A^* v\) satisfies
\[
 \eta(u (wvu)^{k-1} w v w_x uwv) = 
\eta(uwv)^k \eta(w_x) \eta(uwv) =
(m')^k x m' = n'.
\]
Hence \(u A^* v \cap \eta^{-1}(n') \neq \emptyset\), which implies \(u A^* v \not\subseteq L\) since \(\eta^{-1}(n') \cap L = \emptyset\).
\end{proof}
 \section{Probabilistic Independence of Star-Free and Group Languages}\label{sec:ind}
In probability theory, two events \(A\) and \(B\) are said to be independent
if their joint probability equals the product of their probabilities:
\(P(A \cap B) = P(A) \cdot P(B)\).
One important application of Theorem~\ref{thm:sf} is the following probabilistic independence of star-free and group languages, which emphasises how these two language classes are orthogonal.
\begin{theorem}[Y.-S.]\label{thm:ind}
For any star-free language \(L\) and any group language \(K\),
we have 
\[
 \cd(L \cap K) = \cd(L) \cdot \cd(K).
\]
\end{theorem}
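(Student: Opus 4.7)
The plan is to reduce the general claim to the case where $L$ is generalised definite via Theorem~\ref{thm:sf}, and to handle that case by first computing the density of the basic blocks $uA^*v \cap K$ directly from the group structure of $K$.

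First, I would verify the product formula for a single block $uA^*v$. Let $\eta\colon A^* \rightarrow G$ be a surjective morphism onto a finite group with $K = \eta^{-1}(T)$ for some $T \subseteq G$. Every word in $uA^*v$ is uniquely of the form $uxv$ with $x \in A^*$, and $uxv \in K$ iff $\eta(x) \in \eta(u)^{-1}T\eta(v)^{-1}$. Hence $uA^*v \cap K = u\,\eta^{-1}(S)\,v$ with $S = \eta(u)^{-1}T\eta(v)^{-1}$. Since translation in a group is a bijection, $\card{S} = \card{T}$, so summing Lemma~\ref{lem:groupdensity} over the fibers gives $\cd(\eta^{-1}(S)) = \card{T}/\card{G} = \cd(K)$, and then Lemma~\ref{lem:density}(3) yields
\[
 \cd(uA^*v \cap K) = \card{A}^{-|uv|}\cd(K) = \cd(uA^*v)\,\cd(K).
\]

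Next, I would extend this to any generalised definite $D = E \cup \bigcup_{(u,v) \in F} uA^*v$. Choose $N$ at least as large as every $|u|$ and $|v|$ occurring in $F$ and every word in $E$. For every $w$ with $|w| \geq 2N$, membership $w \in D$ depends only on the length-$N$ prefix and length-$N$ suffix of $w$, so there is $S' \subseteq A^N \times A^N$ with $D \cap A^{\geq 2N} = \bigsqcup_{(p,s) \in S'} (pA^*s \cap A^{\geq 2N})$ as a \emph{disjoint} union. Since $A^{<2N}$ is finite and therefore null, finite additivity (Lemma~\ref{lem:density}) together with the single-block computation gives $\cd(D \cap K) = \sum_{(p,s) \in S'} \cd(pA^*s)\,\cd(K) = \cd(D)\,\cd(K)$.

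Finally, let $L$ be an arbitrary star-free language. By Theorem~\ref{thm:sf}, there exist sequences $D_m^- \subseteq L \subseteq D_m^+$ of generalised definite languages with $\cd(D_m^\pm) \rightarrow \cd(L)$, and hence by the previous step $\cd(D_m^\pm \cap K) = \cd(D_m^\pm)\,\cd(K) \rightarrow \cd(L)\,\cd(K)$. The inclusions $D_m^- \cap K \subseteq L \cap K \subseteq D_m^+ \cap K$ sandwich the Ces\`aro partial sums of $L \cap K$ between those of $D_m^- \cap K$ and $D_m^+ \cap K$; taking $\liminf$ and $\limsup$ in the length parameter and then letting $m \rightarrow \infty$ squeezes both the lower and upper Ces\`aro limits of $L \cap K$ to $\cd(L)\,\cd(K)$, which simultaneously yields the existence of $\cd(L \cap K)$ and the desired identity. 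The main delicacy lies in this last step: because $\cd$ is a Ces\`aro limit rather than a pointwise limit at each length, the sandwich must be propagated at the level of $\liminf$/$\limsup$ of the partial sums, so one cannot argue length-by-length.
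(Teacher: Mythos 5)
Your proposal is correct and follows essentially the same route as the paper: the single-block computation for \(uA^*v \cap K\) is exactly the paper's Lemma~\ref{lem:ind}, the passage to disjoint unions of blocks matches the paper's ``without loss of generality the blocks are disjoint'' step, and the final reduction via Theorem~\ref{thm:sf} with a \(\liminf\)/\(\limsup\) squeeze is the paper's argument (which writes out only the inner inequality and declares the outer one symmetric). Your version is slightly more explicit in two places the paper glosses over --- the normalisation of a general generalised definite language to a disjoint union of uniform-length prefix/suffix blocks, and the remark that the sandwich must be taken at the level of upper and lower Ces\`aro limits --- but these are refinements of the same proof, not a different one.
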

The probabilistic independence of generalised definite languages and group languages is proved more easily; we use it as a lemma for proving Theorem~\ref{thm:ind}.
\begin{lemma}\label{lem:ind}
For any generalised definite language \(L\) of the form \(u A^* v\) for some \(u, v \in A^*\) and any group language \(K\),
we have 
\[
 \cd(L \cap K) = \cd(L) \cdot \cd(K).
\]
\end{lemma}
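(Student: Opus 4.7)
The plan is to reduce everything to the fiber of a group morphism and then apply the density rules already collected in Lemma~\ref{lem:density} and Lemma~\ref{lem:groupdensity}. Fix a surjective morphism \(\eta\colon A^* \to G\) onto a finite group and a subset \(P \subseteq G\) with \(K = \eta^{-1}(P)\). By finite additivity of \(\cd\) and Lemma~\ref{lem:groupdensity}, we immediately get \(\cd(K) = |P|/|G|\), and by Lemma~\ref{lem:density}(3) applied with \(L = uA^*v = u \cdot A^* \cdot v\), we get \(\cd(L) = |A|^{-(|u|+|v|)}\).

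The key step is to rewrite \(L \cap K\) as a language of the form \(u \cdot \eta^{-1}(Q) \cdot v\) for a suitable \(Q \subseteq G\). A word lies in \(L \cap K\) iff it is of the form \(uxv\) with \(x \in A^*\) and \(\eta(u)\eta(x)\eta(v) \in P\); since \(G\) is a group, this condition on \(x\) amounts to \(\eta(x) \in Q\) where \(Q = \eta(u)^{-1} P \eta(v)^{-1}\). Thus \(L \cap K = u \cdot \eta^{-1}(Q) \cdot v\) exactly.

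Now the bijectivity of left and right multiplication by group elements gives \(|Q| = |P|\), so Lemma~\ref{lem:groupdensity} (together with finite additivity) yields \(\cd(\eta^{-1}(Q)) = |Q|/|G| = |P|/|G| = \cd(K)\). Applying Lemma~\ref{lem:density}(3) once more to \(u \cdot \eta^{-1}(Q) \cdot v\), we conclude
\[
\cd(L \cap K) \;=\; \cd(\eta^{-1}(Q)) \cdot |A|^{-(|u|+|v|)} \;=\; \cd(K) \cdot \cd(L).
\]

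There is no real obstacle here; the only place to be careful is in verifying that the intersection genuinely factors as a bilateral translate of a fiber of \(\eta\), which relies crucially on \(G\) being a group (so that \(\eta(u)^{-1}\) and \(\eta(v)^{-1}\) make sense). This is exactly where the hypothesis that \(K\) is a group language — rather than, say, an arbitrary regular language — is used, and it is why the same argument will not directly upgrade to the star-free case needed for Theorem~\ref{thm:ind}.
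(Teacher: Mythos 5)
Your proof is correct and follows essentially the same route as the paper's: both rewrite \(L \cap K\) as the bilateral translate \(u\,\eta^{-1}\bigl(\eta(u)^{-1} P\, \eta(v)^{-1}\bigr)\,v\) of a union of fibers, then combine Lemma~\ref{lem:groupdensity} with \(\lvert \eta(u)^{-1} P \eta(v)^{-1}\rvert = \lvert P\rvert\) and Lemma~\ref{lem:density}(3). No gap; your closing remark about where the group hypothesis is used matches the paper's motivation for proving the star-free case separately via Theorem~\ref{thm:sf}.
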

\begin{proof}
Let \(L = u A^* v\) be a generalised definite language, and \(K\) be a group language such that \(K = \eta^{-1}(S)\) for some finite group \(G\), where \(S\) is a subset of \(G\) and \(\eta\colon A^* \rightarrow G\) is a surjective morphism.
Observe that \(L \cap K = u (u^{-1} K v^{-1}) v\) holds.
Since we have \(u^{-1} K v^{-1} = u^{-1} \eta^{-1}(S) v^{-1} 
= \eta^{-1}( \eta(u)^{-1} S \eta(v)^{-1})\),
we can conclude that 
\(\cd(L \cap K) = \cd(u(\eta^{-1}( \eta(u)^{-1} S \eta(v)^{-1})) v) =
\card{A}^{-|uv|} \cdot \cd(\eta^{-1}( \eta(u)^{-1} S \eta(v)^{-1}))
= \cd(L) \cdot \cd(K)
\)
by Lemma~\ref{lem:density} and Lemma~\ref{lem:groupdensity} since 
\(\card{S} = \card{\eta(u)^{-1} S \eta(v)^{-1}}\).
\end{proof}
\begin{proof}[Proof of Theorem~\ref{thm:ind}]
Let \(L\) be a star-free language and \(K\) be a group language.
We only show that \(\cd(L \cap K) \geq \cd(L) \cdot \cd(K)\) holds;
the reverse inequality \(\cd(L \cap K) \leq \cd(L) \cdot \cd(K)\) is also proved in the same way.
By Theorem~\ref{thm:sf}, there is a sequence \(M_n = (\bigcup_{(u, v) \in F_n} u A^* v)_{n \geq 0}\) of generalised definite languages convergent to \(L\) from inner, where \(F_n \subseteq A^* \times A^*\) is finite for each \(n\).
Without loss of generality, we may assume that \(uA^*v \cap u'A^*v' = \emptyset\) for each different pairs \((u,v), (u',v') \in F\).
Then we have
\[
 \cd(L \cap K) \geq \cd\left(\bigcup_{(u,v) \in F_n} u A^* v \cap K\right)
 = \sum_{(u,v) \in F_n} \cd(u A^* v) \cdot \cd(K)
 = \cd(M_n) \cdot \cd(K)
\]
by Lemma~\ref{lem:ind} where \(\cd(M_n)\) tends to \(\cd(L)\) as \(n\) tends to infinity.
Thus \(\cd(L \cap K) \geq \cd(L) \cdot \cd(K)\).
\end{proof}

 \section{Hierarchies of Group Languages Are Not Collapsed by \(\rext\)}\label{sec:group}

The main result of this section is Corollary~\ref{cor:strict-nilp-solv},
which is a consequence of Theorem~\ref{thm:group-robust} below and states that
the classes of regular languages approximable by nilpotent/solvable group languages form strict hierarchies.
This is completely different from the situation in Theorem~\ref{thm:sf},
which shows that the dot-depth hierarchy is collapsed by \(\rext\).

\begin{theorem}[Y.]\label{thm:group-robust}
  If \(\CC \subseteq \group_A\) is a local variety of group languages, then \(\rext(\CC) \cap \group_A = \CC\).
  In particular, if \(\CC \subsetneq \CD \subseteq \group_A\) are local subvarieties,
  then \(\rext(\CC) \cap \CD = \CC\), \(\CD \not\subseteq \rext(\CC)\), and \(\rext(\CC) \subsetneq \rext(\CD)\).
\end{theorem}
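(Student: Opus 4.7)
The inclusion \(\CC \subseteq \rext(\CC) \cap \group_A\) is immediate from the extensivity of Lemma~\ref{lem:closure} together with \(\CC \subseteq \group_A\), so the content of the theorem is the reverse containment. Let \(K \in \rext(\CC) \cap \group_A\) and write \(K = \eta^{-1}(S)\), where \(\eta \colon A^* \to G\) is the syntactic morphism of \(K\) onto a finite group \(G\) and \(S \subseteq G\) is the corresponding disjunctive subset. By Theorem~\ref{thm:duality}, fix a set \(E \subseteq \widehat{A^*} \times \widehat{A^*}\) of profinite equations defining \(\CC\); then \(K \in \CC\) iff \(\eta\) satisfies every equation in \(E\), so my aim is to show this.

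The plan is to single out the smallest normal subgroup \(N_0 \trianglelefteq G\) for which the composed morphism \(A^* \xrightarrow{\eta} G \twoheadrightarrow G/N_0\) satisfies \(E\); this minimum exists because the family of such \(N\) is closed under finite intersection, via the embedding \(G/(N_1 \cap N_2) \hookrightarrow G/N_1 \times G/N_2\) and the fact that pairs of morphisms satisfying \(E\) still do. Setting \(S^{N_0} \defeq \bigcup\{gN_0 \mid gN_0 \subseteq S\}\), I would then establish
\[
 \sup\{\cd(L) \mid L \in \CC,\, L \subseteq K\} = \frac{|S^{N_0}|}{|G|}.
\]
For the upper bound, given any such \(L\) with syntactic morphism \(\alpha \colon A^* \to H\) onto a finite group \(H\), apply Goursat's lemma to \(P \defeq (\eta,\alpha)(A^*) \leq G \times H\): it furnishes \(N_1 \defeq \eta(\alpha^{-1}(1_H)) \trianglelefteq G\), \(N_2 \trianglelefteq H\) and an isomorphism \(G/N_1 \cong H/N_2\). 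Since \(\alpha\) satisfies \(E\), so does the induced morphism \(A^* \to G/N_1\), whence \(N_0 \subseteq N_1\) by minimality; a direct count using Lemma~\ref{lem:groupdensity} then yields \(\cd(L) \leq |S^{N_1}|/|G| \leq |S^{N_0}|/|G|\). The bound is attained by \(\eta^{-1}(S^{N_0})\), which is recognised by \(G/N_0\) and thus lies in \(\CC\).

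Because \(K\) is \(\CC\)-measurable, this inner supremum equals \(\cd(K) = |S|/|G|\), forcing \(|S^{N_0}| = |S|\), i.e.\ \(S\) is a union of \(N_0\)-cosets, so \(SN_0 = S\). For any \(n \in N_0\) and \(x, y \in G\), we can write \(xny = xy \cdot (y^{-1}ny)\) with \(y^{-1}ny \in N_0\) by normality, giving \(xny \in S \Leftrightarrow xy \in S\); by the disjunctiveness of \(S\) this forces \(n = 1_G\), so \(N_0 = \{1_G\}\), meaning \(\eta\) itself satisfies \(E\) and \(K \in \CC\). The ``in particular'' statements are formal: \(\rext(\CC) \cap \CD = (\rext(\CC) \cap \group_A) \cap \CD = \CC \cap \CD = \CC\); \(\CD \subseteq \rext(\CC)\) would give \(\CD \subseteq \rext(\CC) \cap \CD = \CC\), contradicting \(\CC \subsetneq \CD\); and the monotonicity of \(\rext\) together with this strictness yields \(\rext(\CC) \subsetneq \rext(\CD)\). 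The main obstacle is the Goursat-based extremal computation, specifically verifying that no \(L \in \CC\) with \(L \subseteq K\) exceeds \(|S^{N_0}|/|G|\).
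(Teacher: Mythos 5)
Your proposal is correct, but it proves the theorem by a genuinely different route from the paper. The paper argues contrapositively and qualitatively: given \(L \in \group_A \setminus \CC\), it extracts from Theorem~\ref{thm:duality} a single profinite equation \(w = \varepsilon\) that \(\CC\) satisfies but \(L\) violates, uses the disjunctiveness of \(\eta(L)\) to find words \(x, y\) witnessing the violation, and then shows that the non-null set \(I = \eta^{-1}(\eta(x))\,\eta^{-1}(\eta(y))\) is contained in \(L_2 \setminus L_1\) for \emph{every} sandwich \(L_1 \subseteq L \subseteq L_2\) with \(L_1, L_2 \in \CC\), so the inner and outer densities stay a fixed positive distance apart. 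You instead work quantitatively: you introduce the minimal normal subgroup \(N_0 \trianglelefteq G\) whose quotient satisfies the defining equations of \(\CC\), and compute the exact inner density \(\plm{\CC}(K) = \lvert S^{N_0}\rvert / \lvert G\rvert\) via Goursat's lemma applied to the image of \((\eta, \alpha)\) in \(G \times H\); measurability then forces \(S\) to be a union of \(N_0\)-cosets, and disjunctiveness of \(S\) collapses \(N_0\) to the trivial subgroup. I checked the Goursat count you flag as the main obstacle (for \(h \in T\) the full coset \(g_0 N_1\) of its Goursat partners lies in \(S\), giving \(\lvert T\rvert \le \lvert N_2\rvert \cdot \lvert S^{N_1}\rvert / \lvert N_1\rvert\) and hence \(\cd(L) \le \lvert S^{N_1}\rvert/\lvert G\rvert \le \lvert S^{N_0}\rvert/\lvert G\rvert\)), and it goes through; note also that once you know \(S = S^{N_0}\) you could conclude \(K \in \CC\) immediately, since \(K\) is then recognised by \(G/N_0\), without the final disjunctiveness step. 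Both proofs rest on the same two pillars --- the equational description of local varieties and the disjunctiveness of the syntactic image --- but the paper's is shorter and exhibits a concrete obstruction set, while yours is heavier machinery that buys an explicit formula for the \(\CC\)-inner (and, dually, outer) density of an arbitrary group language, which is strictly more information than the theorem requires.
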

Our previous result \cite[Theorem 1]{gcommeasure} is the special case of Theorem~\ref{thm:group-robust} where \(\CC = \gcom_A\).

\begin{proof}
 Let \(L \in \group_A \setminus \CC\).
  It suffices to show that \(L\) is \(\CC\)-immeasurable.
  Since \(\CC\) is a local variety and \(L \notin \CC\), Theorem~\ref{thm:duality} implies
  that there exists an equation \(u = v\) (where \(u, v \in \widehat{A^*}\)) that every language in \(\CC\) satisfies but \(L\) does not.
  Let \(G\) be the syntactic monoid of \(L\), which is a group, and \(\eta\colon A^* \to G\) be the syntactic morphism of \(L\).

  We claim that there exists \(w \in \widehat{A^*}\) such that the equation \(w = \varepsilon\) is equivalent to \(u = v\) on any \(A\)-generated finite group.
  Indeed, if \((u_n)_{n \in \nat}\) and \((v_n)_{n \in \nat}\) are Cauchy sequences in \(A^*\) converging to \(u\) and \(v\) in \(\widehat{A^*}\) respectively,
  then the Cauchy sequence \((u_n v_n^{n! - 1})_{n \in \nat}\) satisfies the claim.

  Hence, for every \(L_1 \in \CC\) and its syntactic morphism \(\theta_1\colon A^* \to G_1\),
  we have \(\hat{\theta}_1(w) = \hat{\theta}_1(\varepsilon) = 1_{G_1}\) and \(\hat{\eta}(w) \neq \hat{\eta}(\varepsilon) = 1_G\).
  Since \(\hat{\eta}(w) \neq 1_G\) and \(G\) is the syntactic monoid of \(L\), there exist \(x, y \in A^*\)
  such that
  \begin{equation}
    \eta(x) \hat{\eta}(w) \eta(y) \in \eta(L) \centernot\iff \eta(x y) \in \eta(L). \label{eq:ebc45b27}
  \end{equation}
  Since the two languages \(\eta^{-1}(\eta(x)), \eta^{-1}(\eta(y)) \subseteq A^*\) are group languages,
  they are non-null by Lemma~\ref{lem:groupdensity}.
 In particular, there should be some word \(w \in \eta^{-1}(\eta(x))\), and hence \(\eta^{-1}(\eta(x)) \eta^{-1}(\eta(y)) \supseteq w \eta^{-1}(\eta(y))\) should have a positive density by Lemma~\ref{lem:density}.
  Hence the concatenation \(I \defeq \eta^{-1}(\eta(x)) \eta^{-1}(\eta(y))\) is non-null.
  To complete the proof, it suffices to show that \(L_1, L_2 \in \CC\) and \(L_1 \subseteq L \subseteq L_2\) implies \(I \subseteq L_2 \setminus L_1\).
  Let \(\theta_1\colon A^* \to G_1\) and \(\theta_2\colon A^* \to G_2\) be the syntactic morphism of \(L_1\) and \(L_2\), respectively.
  Let \((w_n)_{n \in \nat}\) be a Cauchy sequence in \(A^*\) converging to \(w\) in \(\widehat{A^*}\).
  Then we have \(\eta(w_k) = \hat{\eta}(w)\), \(\theta_1(w_k) = \hat{\theta}_1(w)\), and \(\theta_2(w_k) = \hat{\theta}_2(w)\) for sufficiently large \(k \in \nat\).

  We show \(I \subseteq L_2 \setminus L_1\).
  Let \(z \in I\).
  Then there exist \(\tilde{x} \in \eta^{-1}(\eta(x))\) and \(\tilde{y} \in \eta^{-1}(\eta(y))\) such that \(z = \tilde{x} \tilde{y}\).
  Let \(i \in \{1, 2\}\).
  Since \(\theta_i(w_k) = 1_{G_i}\) and \(G_i\) is the syntactic monoid of \(L_i\),
  we have \(\theta_i(\tilde{x} w_k \tilde{y}) \in \theta_i(L_i) \iff \theta_i(\tilde{x} \tilde{y}) \in \theta_i(L_i)\),
  \ie,
  \begin{equation}
    z = \tilde{x} \tilde{y} \in L_i \iff \tilde{x} w_k \tilde{y} \in L_i. \label{eq:fbaddcb1}
  \end{equation}
  It follows from \eqref{eq:ebc45b27}, \(\eta(\tilde{x}) = \eta(x)\), \(\eta(\tilde{y}) = \eta(y)\), and \(\hat{\eta}(w) = \eta(w_k)\) that
  \begin{equation}
    z = \tilde{x} \tilde{y} \in L \centernot\iff \tilde{x} w_k \tilde{y} \in L. \label{eq:500d85c6}
  \end{equation}
We show \(z \in L_2 \setminus L_1\) by considering two cases depending on whether \(z \in L\) or \(z \notin L\).
  First, suppose that \(z \in L\).
  Since \(L \subseteq L_2\), we have \(z \in L_2\).
  On the other hand, by \eqref{eq:500d85c6}, we know that \(\tilde{x} w_k \tilde{y} \notin L \supseteq L_1\), which implies \(\tilde{x} w_k \tilde{y} \notin L_1\).
  Then, by \eqref{eq:fbaddcb1}, it follows that \(z \notin L_1\), and thus we obtain \(z \in L_2 \setminus L_1\).
  The case where \(z \notin L\) can be proved in a similar manner.
\end{proof}

\begin{corollary}\label{cor:strict-nilp-solv}
  Suppose \(\card{A} \geq 2\).
  For \(n \geq 1\), we have
  \begin{itemize}
    \item \(\rext(\gnil_A^{\leq n}) \subsetneq \rext(\gnil_A^{\leq n + 1}) \subsetneq \rext(\group_A)\), and
    \item \(\rext(\gsol_A^{\leq n}) \subsetneq \rext(\gsol_A^{\leq n + 1}) \subsetneq \rext(\group_A)\).
  \end{itemize}
\end{corollary}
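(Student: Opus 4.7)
The plan is to combine Proposition~\ref{prop:strict} and Theorem~\ref{thm:group-robust} directly. Proposition~\ref{prop:strict} supplies the strict inclusions \(\gnil_A^{\leq n} \subsetneq \gnil_A^{\leq n+1}\) and \(\gsol_A^{\leq n} \subsetneq \gsol_A^{\leq n+1}\), and by the discussion in Section~\ref{sec:pre} each layer in these hierarchies is a local subvariety of \(\group_A\) (being defined by profinite equations coming from the iterated commutator identities for nilpotency class and from the iterated derived-subgroup identities for solvable derived length). Hence Theorem~\ref{thm:group-robust} applies and transfers strictness to the measurable hulls, yielding \(\rext(\gnil_A^{\leq n}) \subsetneq \rext(\gnil_A^{\leq n+1})\) and \(\rext(\gsol_A^{\leq n}) \subsetneq \rext(\gsol_A^{\leq n+1})\).

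For the right-most strict inclusions \(\rext(\gnil_A^{\leq n+1}) \subsetneq \rext(\group_A)\) and \(\rext(\gsol_A^{\leq n+1}) \subsetneq \rext(\group_A)\), it suffices to show that \(\gnil_A^{\leq n+1} \subsetneq \group_A\) and \(\gsol_A^{\leq n+1} \subsetneq \group_A\) as local subvarieties, and then invoke the ``moreover'' clause of Theorem~\ref{thm:group-robust} once more. The first strictness is obtained either by iterating Proposition~\ref{prop:strict} (giving \(\gnil_A^{\leq n+1} \subsetneq \gnil_A^{\leq n+2} \subseteq \group_A\)) or by exhibiting any non-nilpotent finite group (e.g.\ \(S_3\)) as the syntactic monoid of some regular language over \(A\), which is possible since \(\card{A} \geq 2\). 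The solvable case is analogous, using the fact that the solvable hierarchy does not exhaust \(\group_A\) (any non-solvable finite group such as \(A_5\) witnesses this).

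Since each required strictness reduces to a single application of Theorem~\ref{thm:group-robust}, no genuine obstacle arises at this stage; the entire content of the corollary is concentrated in Theorem~\ref{thm:group-robust} together with the strictness and non-exhaustion properties of the two hierarchies. The only bookkeeping point worth noting is the verification that each \(\gnil_A^{\leq n}\) and \(\gsol_A^{\leq n}\) is a local variety (rather than merely a variety of finite groups), but this is already recorded in Section~\ref{sec:pre} through the explicit profinite equations.
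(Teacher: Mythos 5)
Your proposal is correct and follows exactly the paper's (one-line) proof: combine Proposition~\ref{prop:strict} with Theorem~\ref{thm:group-robust}, using the fact recorded in Section~\ref{sec:pre} that each \(\gnil_A^{\leq n}\) and \(\gsol_A^{\leq n}\) is a local variety. Your explicit handling of the right-most inclusions (via \(\gnil_A^{\leq n+1} \subsetneq \gnil_A^{\leq n+2} \subseteq \group_A\), or a non-nilpotent/non-solvable two-generated witness) is just the detail the paper leaves implicit.
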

\begin{proof}
  By Proposition~\ref{prop:strict} and Theorem~\ref{thm:group-robust}.
\end{proof}

 \section{Conclusion and Open Problems}\label{sec:fut}
Theorem~\ref{thm:sf} and our previous work~\cite{gdmeasure,ppl2024} give a decidable characterisation of the \(\sf\)-measurable regular languages.
Its algebraic characterisation (Theorem~\ref{thm:algsf}) can be considered as a natural extension of Sch\"utzenberger's theorem for star-free languages.
The decidability of the \(\group\)-measurability for regular languages is still open.
Perhaps the profinite method can be useful to solve this problem.
The probabilistic independence of star-free and group languages describes how these two classes are orthogonal.

In 2023, the first author posed a question whether the equation
\(\reg = \bool(\rext(\sf) \cup \rext(\group))\) holds or not, at some domestic workshops in Japan.
If this kind of equation \emph{were} true, it can be considered as a ``measure-theoretic decomposition theorem'' of regular languages.
Later, the second author claimed that a weaker equation
\(\reg = \rext(\bool(\sf \cup \group))\) holds and
this result was announced at the international conferences Highlights 2024 by the first author.
However, after some discussion between these two authors, a flaw was found in the proof of \(\reg = \rext(\bool(\sf \cup \group))\).
Actually, this equation does \emph{not} hold as follows.
\begin{proposition}\label{prop:counterexample}
Let \(\CT_4\) be the full transformation monoid on the four elements set \(\{1,2,3,4\}\) whose identity element is the identity function \(\mathrm{id}\).
Define \(e, f \in \CT_4\) as
\begin{align*}
 e\colon 1 \mapsto 1, 2 \mapsto 2, 3 \mapsto 1, 4 \mapsto 2 \quad \text{ and } \quad
 f\colon 1 \mapsto 4, 2 \mapsto 3, 3 \mapsto 3, 4 \mapsto 4.
\end{align*}
Let \(M\) be the submonoid of \(\CT_4\) generated by \(e\) and \(f\).
For an alphabet \(A = \{a,b\}\), define a morphism \(\eta\colon A^* \rightarrow M\) as \(\eta(a) = e\) and \(\eta(b) = f\).
Then \(\eta^{-1}(e)\) is \(\bool(\sf_A \cup \group_A)\)-immeasurable.
In particular, we have \(\reg_A \supsetneq \rext_A(\bool(\sf_A \cup \group_A))\).
\end{proposition}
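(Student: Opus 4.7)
The plan is to identify, inside $M$, a ``twin'' element of $e$, namely $n \defeq efe$, whose fibre $L' \defeq \eta^{-1}(efe)$ is disjoint from $L \defeq \eta^{-1}(e)$ yet completely indistinguishable from it by any Boolean combination of star-free and group languages; any candidate inner approximation of $L$ would then be forced to approximate $L'$ simultaneously, contradicting $L \cap L' = \varnothing$. A direct computation with the four-element transformations yields $\card{M} = 9$ and, letting $|w|_{ab}$ count the occurrences of the factor $ab$ in $w$,
\[
  L \;=\; \{a\} \cup \bigl\{\, w \in aA^*a \bigm\vert |w|_{ab} \equiv 0 \!\!\pmod 2 \,\bigr\},
  \quad
  L' \;=\; \bigl\{\, w \in aA^*a \bigm\vert |w|_{ab} \equiv 1 \!\!\pmod 2 \,\bigr\};
\]
a routine Markov-chain computation on uniform random letters then gives $\cd(L) = \cd(L') = 1/8$.

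Since $\sf_A$ and $\group_A$ are each closed under complement, every $K \in \bool(\sf_A \cup \group_A)$ admits a disjoint canonical form $K = \bigsqcup_{i} (S_i \cap G_i)$ with $S_i \in \sf_A$ and $G_i \in \group_A$, and Theorem~\ref{thm:ind} gives $\cd(S_i \cap G_i) = \cd(S_i) \cdot \cd(G_i)$ with densities additive over the disjoint union. The crux of the argument is then the following \emph{indistinguishability lemma}:
\[
  \cd(S \cap G \cap L) \;=\; \cd(S \cap G \cap L')
  \qquad \text{for every } S \in \sf_A \text{ and } G \in \group_A.
\]
To prove it I would first use Theorem~\ref{thm:sf} to approximate $S$ in density by disjoint unions of generalised definite cylinders $uA^*v$, reducing to the equality for each such cylinder. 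Writing $G = \theta^{-1}(H')$ for a surjective morphism $\theta\colon A^* \to H$ to a finite group, and observing that the symmetric difference of $L$ and $L'$ inside $uA^*v$ is controlled only by the parity of $|uwv|_{ab}$ (which differs from $|w|_{ab}$ by a constant depending on the boundary letters of $u$ and $v$), the lemma reduces to the asymptotic equidistribution statement that, conditioned on ``first letter $= x$, last letter $= y$, and $\theta(w) \in C$'', the parity of $|w|_{ab}$ is $1/2$--$1/2$ in the limit. I would verify this by a spectral analysis of the twisted transfer matrix on the joint state space $\{a, b\} \times \integer/2 \times H$, equivalently by showing that for every non-trivial character $\chi$ of $H$, $\sum_{w \in A^n} (-1)^{|w|_{ab}}\,\chi(\theta(w)) = o(\card{A}^n)$; this estimate is ultimately rooted in the fact that the syntactic monoid $M$ of $\{w : |w|_{ab} \text{ even}\}$ has trivial group quotient, and therefore cannot ``align'' with any non-trivial group morphism.

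The conclusion is then immediate. Suppose for contradiction that $L$ is $\bool(\sf_A \cup \group_A)$-measurable and let $(K_n)_n$ be a sequence in $\bool(\sf_A \cup \group_A)$ with $K_n \subseteq L$ and $\cd(K_n) \to \cd(L) = 1/8$. Summing the lemma over the canonical pieces of each $K_n$ gives $\cd(K_n \cap L') = \cd(K_n \cap L) = \cd(K_n) \to 1/8$, but $K_n \cap L' \subseteq L \cap L' = \varnothing$ forces $\cd(K_n \cap L') = 0$ for every $n$, a contradiction. Thus $\plm{\bool(\sf_A \cup \group_A)}(L) = 0 < 1/8 \leq \pum{\bool(\sf_A \cup \group_A)}(L)$, so $L = \eta^{-1}(e)$ is $\bool(\sf_A \cup \group_A)$-immeasurable, which gives the final claim $\reg_A \supsetneq \rext_A(\bool(\sf_A \cup \group_A))$. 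The main obstacle throughout is the character-orthogonality estimate inside the indistinguishability lemma: ruling out that any (possibly non-abelian) finite-group morphism could detect the $ab$-factor parity of a random word requires a careful spectral radius bound on the twisted transfer matrix, and is the only genuinely non-combinatorial ingredient of the argument.
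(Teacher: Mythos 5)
Your setup is correct and your overall architecture would work: the identification of the twin fibre \(L' = \eta^{-1}(efe)\) (the other element of \(\CH_e\)), the explicit descriptions of \(L\) and \(L'\) via the parity of \(|w|_{ab}\), the densities \(1/8\), the disjoint normal form \(K = \bigsqcup_i (S_i \cap G_i)\), and the final contradiction \(\cd(K_n) = \cd(K_n \cap L) = \cd(K_n \cap L') = 0\) are all sound. The problem is that everything has been funnelled into the ``indistinguishability lemma'' \(\cd(S \cap G \cap L) = \cd(S \cap G \cap L')\), and that lemma is not proved. The character-sum estimate \(\sum_{w \in A^n} (-1)^{|w|_{ab}} \chi(\theta(w)) = o(\card{A}^n)\) is exactly the non-trivial content, and the justification you offer for it --- that the syntactic monoid of \(\{w : |w|_{ab} \text{ even}\}\) has trivial maximal group quotient --- is a heuristic, not an argument. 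Note that what you are implicitly invoking is a strengthening of Theorem~\ref{thm:ind} from star-free languages to languages whose syntactic monoid merely has trivial group \emph{quotient} (while still containing non-trivial subgroups in its kernel, as \(L\) itself does); nothing in the paper, and nothing in your sketch, establishes that strengthening, so as written the proof has a genuine hole at its crux.

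The paper closes this hole with a much lighter combinatorial device, which you could import to finish your lemma. After the same first reduction (Theorem~\ref{thm:sf} reduces \(\sf\) to \(\gd\)), the paper shows directly that \(S \cap G \subseteq \eta^{-1}(e)\) with \(S \in \gd_A\), \(G \in \group_A\) forces \(S \cap G\) to be \emph{finite}: if \(uv \in S \cap G\) with \(|u|, |v|\) exceeding the definiteness bound of \(S\) and with \(\eta(u) \in eM\), \(\eta(v) \in Me\), choose \(w = a_1 \cdots a_k\) with \(\eta(uwv) = efe\) and insert \(\widetilde{w} = a_1^{n} \cdots a_k^{n}\) with \(n = \card{G'}\) for the group \(G'\) recognising \(G\). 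Idempotency of \(e\) and \(f\) gives \(\eta(u\widetilde{w}v) = efe \neq e\), the prefix and suffix are untouched so \(u\widetilde{w}v\) stays in \(S\), and \(\varphi(\widetilde{w}) = 1\) so it stays in \(G\) --- contradiction. This single insertion trick is precisely a constructive form of your indistinguishability statement at the level of each cylinder-times-coset, and it replaces the spectral-radius analysis entirely; no equidistribution or transfer-matrix estimate is needed.
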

\begin{proof}
By simple calculation, one can easily observe that 
\(e^2 = e, f^2 = f, efefe = e, fefef = f\) hold so that 
\(M = \{\mathrm{id},e,ef,efe,efef,f,fe,fef,fefe\}\).
The kernel of \(M\) is \(M \setminus \{\mathrm{id}\}\) and
\(\CH_e = \{e, efe\}\) and \(\CH_f = \{f, fef\}\) are maximal subgroups of \(M\), respectively.

Thanks to Theorem~\ref{thm:sf}, we have 
\(\rext_A(\bool(\sf_A \cup \group_A)) = \rext_A(\bool(\gd_A \cup \group_A))\).
Notice again that \(\cd(\eta^{-1}(e)) > 0\) holds as stated in Remark~\ref{rem:kernel}.
Let \(L \in \gd_A\) and \(K \in \group_A\).
To show that \(\eta^{-1}(e)\) is \(\bool(\gd_A \cup \group_A)\)-immeasurable, it is enough to show that \(L \cap K \subseteq \eta^{-1}(e)\) implies \(\cd(L \cap K) = 0\) (because this condition implies there is no convergent sequence of languages in \(\bool(\gd_A \cup \group_A)\) to \(\eta^{-1}(e)\) from inner).
We actually show that \(L \cap K \subseteq \eta^{-1}(e)\) implies \(L \cap K\) is a finite set.

Assume \(L \cap K \subseteq \eta^{-1}(e)\) and assume that
\(L \cap K\) is infinite.
Since \(L \in \gd_A\), there is some constant \(N\) such that,
for any word \(u, v \in A^*\) with \(|u|, |v| > N\),
\(uwv \in L \Leftrightarrow uw'v \in L\) for any pair of words
\(w\) and \(w'\).
By the assumption, \(L \cap K \subseteq \eta^{-1}(e)\) is infinite hence there is a pair of words \(u\) and \(v\) such that \(uv \in L \cap K\), \(|u|, |v| > N\), \(\eta(u) \in eM\) and \(\eta(v) \in Me\).
Since \(K\) is a group language, there is a finite group \(G\) and a morphism \(\varphi\colon A^* \rightarrow G\) such that
\(K = \varphi^{-1}(\varphi(K))\).
Since \(\eta(u) \in eM = \{e, ef, efe, efef\}\) and \(\eta(v) \in Me = \{fefe, efe, fe, e\}\),
one can choose a word \(w = a_1 a_2 \cdots a_k \in A^*\) so that \(\eta(u) \eta(w) \eta(v) = e f e\).
(In fact, we may assume \(k \leq 2\).)
Let \(n = \card{G}\) and \(\widetilde{w} = a_1^n a_2^n \cdots a_k^n\).
Since \(\eta(a) = e\) and \(\eta(b) = f\) are both idempotents, we have
\(\eta(u \widetilde{w} v) =
\eta(u) \eta(\widetilde{w}) \eta(v) =
\eta(u) \eta(w) \eta(v) = efe
\)
thus \(u\widetilde{w}v\) is not in \(\eta^{-1}(e)\).
On the other hand, we have \(u \widetilde{w}v \in L\) since \(|u|, |v| > N\) and \(uv \in L\).
But \(\varphi(u\widetilde{w}v) = \varphi(uv)\) holds by the definition of \(\widetilde{w}\) which implies
\(u\widetilde{w}v \in \varphi^{-1}(\varphi(uv)) \subseteq K\).
This means that \(L \cap K\) contains the word \(u\widetilde{w}v\) which is not contained in \(\eta^{-1}(e)\), which contradicts with the assumption that \(L \cap K \subseteq \eta^{-1}(e)\).
\end{proof}
The authors are still interested in whether there is a suitable notion \(\CO\colon 2^{A^*} \rightarrow 2^{A^*}\) of ``operation'' on languages such that the equation like \(\reg = \CO(\rext(\sf) \cup \rext(\group))\) or \(\reg = \rext(\CO(\sf \cup \group))\) holds.
It must be richer than the Boolean closure operator \(\bool\), but might be simpler than the operation on languages corresponding to the wreath product operation on monoids.\\

\noindent
{\bf Acknowledgement.}
The authors thank to anonymous reviewers for many valuable comments for improving the presentation of this paper.
The first author gratefully acknowledge helpful and encouraging discussion with Miko{\l}aj Boja\'nczyk and Volker Diekert.
Volker pointed out an interesting further research direction on this topic.

\appendix
\section{Appendix}\label{sec:appendix}

To prove Proposition~\ref{prop:strict},
we need the following Lemmata~\ref{lem:two-gen-nilpotent} and \ref{lem:two-gen-solvable}.

\begin{lemma}\label{lem:two-gen-nilpotent}
  For \(n \geq 1\), there is a two-generated finite nilpotent group of class \(n\).
\end{lemma}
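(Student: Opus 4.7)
The plan is to construct $G$ as a finite quotient of the free two-generated nilpotent group of class $n$. Let $F$ denote the free group on two generators $x, y$, and write $\gamma_1(F) \defeq F$, $\gamma_{k+1}(F) \defeq [\gamma_k(F), F]$ for the terms of its lower central series. Set $N \defeq F / \gamma_{n+1}(F)$; by construction, $N$ is 2-generated and nilpotent of class at most $n$.

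First I would verify that $N$ has class \emph{exactly} $n$, \ie, that $\gamma_n(F) / \gamma_{n+1}(F)$ is nontrivial. A classical witness is the left-normed basic commutator $c = [x, y, y, \dotsc, y]$ with $n - 1$ copies of $y$: it belongs to $\gamma_n(F)$, and it is a standard consequence of Hall's basis theorem (equivalently, Witt's dimension formula) that its image in $\gamma_n(F) / \gamma_{n+1}(F)$ is nontrivial; see, \eg, \cite{rose-group}. I would then invoke Hirsch's classical theorem that every finitely generated nilpotent group is residually finite; applied to the nontrivial image $\bar{c} \in \gamma_n(N)$, this yields a normal subgroup $K \trianglelefteq N$ of finite index with $\bar{c} \notin K$. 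Setting $G \defeq N / K$, the group $G$ is finite, 2-generated (as a quotient of the 2-generated $N$), and nilpotent of class $\leq n$. Since the image of $\bar{c}$ in $G$ lies in $\gamma_n(G)$ and is still nontrivial, the class of $G$ is exactly $n$, as required.

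The main obstacle is essentially bookkeeping: nothing beyond two classical facts from group theory --- nontriviality of the $n$-th graded piece of the lower central series of $F_2$, and residual finiteness of finitely generated nilpotent groups --- is needed, and both are covered in the cited references \cite{rose-group} and \cite{Dummit-Foote-2004}. No choice of prime, explicit matrix representation, or delicate combinatorial computation is required; the construction is uniform in $n$.
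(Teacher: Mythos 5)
Your argument is correct, but it takes a genuinely different and considerably heavier route than the paper. The paper simply exhibits an explicit witness: the dihedral group \(D_{2^{n+1}}\) of order \(2^{n+1}\) is a two-generated \(2\)-group whose nilpotency class is exactly \(n\), so the lemma is settled by a single standard example from \cite{Dummit-Foote-2004}. Your construction instead passes through the free nilpotent group \(N = F_2/\gamma_{n+1}(F_2)\), certifies that its class is exactly \(n\) via the nonvanishing of the basic commutator \([x,y,\dotsc,y]\) in \(\gamma_n(F_2)/\gamma_{n+1}(F_2)\) (Hall's basis theorem / Witt's formula), and then descends to a finite quotient using residual finiteness of finitely generated nilpotent groups. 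Each step is sound, and the payoff is uniformity and generality --- the same scheme produces finite \(k\)-generated nilpotent groups of any prescribed class and adapts to other varieties --- but it invokes two substantial classical theorems where one concrete family of \(2\)-groups suffices, and neither the basis theorem nor Hirsch's residual-finiteness theorem is actually proved in the references you cite (\cite{rose-group}, \cite{Dummit-Foote-2004}), so you would need to point to a source that covers them. For the purposes of this paper, the explicit dihedral example is the shorter and more self-contained choice.
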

\begin{proof}
  For each \(n \geq 1\), the dihedral group \(D_{2^{n + 1}}\) of order \(2^{n + 1}\)
  is a two-generated finite nilpotent group of class \(n\) (see \eg, \cite[p.\ 191, Example]{Dummit-Foote-2004}).
\end{proof}

\begin{lemma}\label{lem:two-gen-solvable}
  For \(n \geq 1\), there is a two-generated finite solvable group of derived length \(n\).
\end{lemma}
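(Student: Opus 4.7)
The plan is to construct, for each \(n \geq 1\), an explicit family of two-generated finite solvable groups \(G_n\) of derived length exactly \(n\) by iterated (restricted) wreath products of the cyclic group \(C_2\). Set \(G_1 = C_2\), which is abelian and two-generated. For the inductive step, given a two-generated finite solvable group \(G_n\) of derived length \(n\) with chosen generators \(h_1, h_2\), define \(G_{n+1} = C_2 \wr G_n\), where \(G_n\) acts on the base group \(A = C_2^{|G_n|}\) by regular left multiplication.

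First I would verify that \(G_{n+1}\) has derived length exactly \(n+1\). Since \(G_{n+1}/A \cong G_n\), the derived series of \(G_{n+1}\) projects onto that of \(G_n\). By the inductive hypothesis \((G_n)^{(n-1)} \neq 1\) while \((G_n)^{(n)} = 1\), so \((G_{n+1})^{(n)} \leq A\). On one hand \((G_{n+1})^{(n+1)} \leq [A, A] = 1\) because \(A\) is abelian, giving the upper bound \(n+1\). On the other hand, a direct commutator computation (using the faithfulness of the regular action of \(G_n\) on \(A\)) shows that \((G_{n+1})^{(n)}\) contains nonzero elements of \(A\), giving the lower bound. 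Second, I would show \(G_{n+1}\) is two-generated by choosing the lifts \(\tilde{h}_1 = (\delta_e, h_1)\), where \(\delta_e \in A\) is the indicator of the identity element of \(G_n\), and \(\tilde{h}_2 = (0, h_2)\). Since the regular action of \(G_n\) on itself is transitive, the \(G_n\)-orbit of \(\delta_e\) inside \(A\) is the full standard basis \(\{\delta_g \mid g \in G_n\}\); conjugating \(\tilde{h}_1 \tilde{h}_2^{-1} = (\delta_e, h_1 h_2^{-1})\) by words in \(\langle \tilde{h}_1, \tilde{h}_2 \rangle\) that project onto arbitrary elements of \(G_n\) then recovers every \(\delta_g\), so that \(A \leq \langle \tilde{h}_1, \tilde{h}_2 \rangle\); composing with the projections \(h_1, h_2\) gives all of \(G_{n+1}\).

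The main obstacle is the second part, namely verifying that the two specific lifted elements really generate the whole wreath product; this requires careful inductive bookkeeping of how the conjugation action of \(G_n\) shifts the support of elements in the base group. An alternative, less explicit route is to invoke the Neumann--Neumann embedding theorem \cite{neumann-neumann-1959} (see also \cite{rose-group}): every finite solvable group embeds in a two-generated finite solvable group, and by starting from an iterated wreath product of appropriate derived length and tracking the embedding's effect on the derived series one can ensure the derived length of the ambient group is exactly \(n\). Either route yields the desired two-generated finite solvable group of derived length \(n\).
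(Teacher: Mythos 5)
There is a genuine gap, and it is fatal to your primary construction: the iterated wreath product \(G_{n+1} = C_2 \wr G_n\) (regular action) stops being two-generated at the second step, no matter which lifts of \(h_1, h_2\) you choose. Write \(W = A \rtimes G_n\) with \(A = \mathbb{F}_2[G_n]\) and let \(\epsilon\colon A \to \mathbb{F}_2\) be the augmentation (coordinate sum). Since \(\epsilon\) is invariant under the permutation action of \(G_n\), the map \((f, g) \mapsto (\epsilon(f), \bar{g})\) is a surjective homomorphism from \(W\) onto \(C_2 \times G_n^{\mathrm{ab}}\), so the minimal number of generators of \(W\) is at least that of \(C_2 \times G_n^{\mathrm{ab}}\). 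Already \(G_2 = C_2 \wr C_2 \cong D_4\) has \(G_2^{\mathrm{ab}} \cong C_2 \times C_2\), hence \(G_3 = C_2 \wr D_4\) surjects onto \(C_2^3\) and needs at least three generators; the ``careful inductive bookkeeping'' you flag as the main obstacle cannot be carried out for any choice of \(\tilde{h}_1, \tilde{h}_2\). (This is exactly why staying two-generated is the hard part of the lemma and why a nontrivial embedding theorem is invoked.) Separately, the lower bound on the derived length is also not a ``direct commutator computation using faithfulness'': it amounts to showing that a product of augmentation ideals \(\omega(G_n^{(n-1)}) \cdots \omega(G_n^{(0)})\) is nonzero in \(\mathbb{F}_2[G_n]\), and cancellation modulo \(2\) is a real concern (for instance \((k-1)^2 = 0\) in \(\mathbb{F}_2[G]\) for every involution \(k\)), so this step needs an actual argument.

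Your fallback route is essentially the paper's proof, but as stated it is also incomplete. Theorem~\ref{thm:embed-solvable} only embeds a finite solvable group of derived length \(n\) into a two-generated finite solvable group \(H\) of derived length \emph{at most} \(n + 2\); ``tracking the embedding's effect on the derived series'' gives no way to force the derived length of \(H\) to equal \(n\), nor does a clever choice of the starting group. The missing step is to pass to a quotient: a quotient of a two-generated group is two-generated, and by Lemma~\ref{lem:solvable-reduce} one can factor out a suitable term of the derived series of \(H\) (once or twice) to cut its derived length down to exactly \(n\). With that quotient step supplied, your second route becomes the paper's argument.
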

To prove Lemma~\ref{lem:two-gen-solvable}, we use the following embedding theorem due to Neumann-Neumann,
and Lemmata~\ref{lem:solvable-length-n} and \ref{lem:solvable-reduce} below.

\begin{theorem}[{\cite[Theorem 5.6]{neumann-neumann-1959}}]\label{thm:embed-solvable}
  Every finite solvable group \(G\) of derived length \(n\) can be embedded in
  some two-generated finite solvable group \(H\) of derived length at most \(n + 2\).
\end{theorem}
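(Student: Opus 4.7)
The plan is to construct \(H\) as an iterated wreath product, relying on the well-known bound that for finite groups \(U\) and \(V\), the derived length of \(U \wr V\) is at most the sum of the derived lengths of \(U\) and \(V\); this follows because \((U \wr V)' \subseteq U^{|V|} \cdot V'\), and iterating the derived series reduces to alternately inspecting the base and the top. The strategy is therefore to wreath-produce \(G\) with a small number of abelian layers, so as to simultaneously control the derived length and the number of generators.

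Concretely, I would proceed in two stages. In the first stage, write \(G = \langle g_1, \dotsc, g_k \rangle\) and embed \(G\) into \(W = G \wr C_k\), where \(C_k\) is cyclic of order \(k\) acting on \(G^k\) by cyclic permutation of coordinates. Since \(C_k\) is abelian, \(W\) is finite solvable of derived length at most \(n + 1\), and \(G\) embeds into \(W\) as a subgroup, either diagonally into the base \(G^k\) or through a single coordinate. In the second stage, enlarge \(W\) to \(H = W \wr C_m\) for a sufficiently large cyclic group \(C_m\); by the same bound, \(H\) is finite solvable of derived length at most \(n + 2\), and contains \(G\) via \(G \hookrightarrow W \hookrightarrow H\). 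For the two generators of \(H\), take \(t\) to be a generator of the top \(C_m\) and \(b \in W^m\) to be an element whose non-identity coordinates enumerate a chosen generating set of \(W\), including generators of both the base \(G^k\) and the top \(C_k\) of \(W\). Conjugation of \(b\) by powers of \(t\) produces all cyclic shifts of this coordinate tuple, and suitable products of these shifts recover every single-coordinate element of the base \(W^m\); thus \(\langle t, b \rangle\) equals all of \(H\).

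The main obstacle is the second step: verifying that \(t\) and \(b\) really generate all of \(H\), rather than some proper subgroup that still happens to contain the embedded copy of \(G\). This is a combinatorial fact about generators of wreath products by regular cyclic groups — essentially, that a single sufficiently ``spread out'' element of the base together with a generator of the top suffices when the top is large enough relative to the number of base generators. Once this is in hand, the embedding chain \(G \hookrightarrow W \hookrightarrow H\) is immediate from the semidirect-product structure of each wreath product, and the derived length estimate follows by two applications of the general bound. The increment of \(+2\) in the derived length is the unavoidable cost of folding the many generators of \(W\) (typically more than two) into only two generators of \(H\) via an additional abelian wreath layer, which matches the bound \(n + 2\) in the statement.
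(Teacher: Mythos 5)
First, note that the paper does not prove this statement at all: it is quoted from Neumann--Neumann \cite{neumann-neumann-1959} (their Theorem~5.6) and used as a black box in the proof of Lemma~\ref{lem:two-gen-solvable}, so there is no in-paper argument to compare against. Your overall shape --- two wreath layers with cyclic tops, each costing one unit of derived length, hence the \(+2\) --- is the right intuition and is in the spirit of the Neumann--Neumann construction. However, the step you yourself flag as the main obstacle is not a ``combinatorial fact''; it is false, and it is exactly where the content of the theorem lives. Concretely, \(\langle t, b\rangle = W \wr C_m\) cannot hold in general for abelianization reasons: one has \((W \wr C_m)^{\mathrm{ab}} \cong W^{\mathrm{ab}} \times C_m\), so already for \(G = C_2 \times C_2\) (derived length \(1\), \(k = 2\)) your \(W = G \wr C_2\) has abelianization \(C_2^3\), and \(H = W \wr C_m\) has abelianization \(C_2^3 \times C_m\), which requires at least three generators whatever \(m\) is; hence \(\langle t, b\rangle\) is always a proper subgroup of \(H\). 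The more local claim that ``suitable products of these shifts recover every single-coordinate element of the base'' fails for the same reason: for \(W = \langle x, y\rangle \cong C_2 \times C_2\), \(m = 2\) and \(b = (x, y)\), the two shifts \((x,y)\) and \((y,x)\) span only a \(2\)-dimensional subspace of \(W^2 \cong \mathbb{F}_2^4\) containing no nontrivial element supported on a single coordinate.

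What the argument actually has to produce is not all of \(H\) but a two-generated subgroup of the (many-generated) double wreath product that still contains an isomorphic copy of \(G\); your first stage, as written, contributes nothing toward this, since embedding \(G\) into \(G \wr C_k\) through one coordinate does not reduce the number of generators needed. The genuine work --- choosing the support of \(b\) so that the differences of the occupied positions are pairwise distinct, so that commutators \([b, b^{t^d}]\) and related products isolate prescribed elements in prescribed coordinates, and then verifying that the resulting subgroup contains a full copy of \(G\) rather than some twisted diagonal --- is precisely what Neumann and Neumann's proof supplies and what is missing here. As it stands, the proposal identifies a plausible construction but asserts, rather than proves, the one statement that needs proof, and the strong form of that assertion (that \(t\) and \(b\) generate all of \(H\)) is refutable.
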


\begin{lemma}[{\cite[9.23 Corollary]{rose-group}}]\label{lem:solvable-length-n}
  For each \(n \geq 1\), there exists a finite solvable group of derived length \(n\).
\end{lemma}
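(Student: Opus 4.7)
The plan is to construct, for each \(n \geq 1\), a finite solvable group of derived length exactly \(n\) by iterated wreath products. Set \(W_1 \defeq \integer/2\integer\) and inductively \(W_{n+1} \defeq W_n \wr \integer/2\integer\), where the wreath product is taken with respect to the regular action of \(\integer/2\integer\) on itself. By induction, each \(W_n\) is a finite \(2\)-group, hence nilpotent, hence solvable. The substantive task is to show that the derived length of \(W_n\) is \emph{exactly} \(n\); both a matching upper and lower bound are required.

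For the upper bound, I would invoke the standard fact that if \(A\) and \(B\) are solvable of derived length \(a\) and \(b\) respectively, then any group \(G\) fitting in a short exact sequence \(1 \to N \to G \to B \to 1\) with \(N\) a direct power of \(A\) has derived length at most \(a + b\). Applying this with \(A = W_n\), \(B = \integer/2\integer\), and \(N = W_n^{\integer/2\integer}\) (whose derived length equals that of \(W_n\), since derived subgroups of a direct product are computed componentwise), induction gives that \(W_{n+1}\) has derived length at most \(n+1\).

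For the lower bound, the key observation is that when \(A\) is nontrivial and \(B = \integer/2\integer\) acts nontrivially by swap, the derived subgroup \((A \wr B)'\) contains a subgroup isomorphic to \(A\). Concretely, the antidiagonal elements \((a, a^{-1}) \in A \times A \subseteq A \wr B\) arise as commutators \([(a, 1_A), \sigma]\) with \(\sigma\) the swap generator of \(B\), and the map \(a \mapsto (a, a^{-1})\) is an injective homomorphism from \(A\) into \((A \wr B)' \cap A^B\). Iterating this, the \(k\)-th derived subgroup of \(A \wr B\) contains an isomorphic copy of \(A^{(k-1)}\); so if \(A\) has derived length exactly \(d\), then \(A \wr B\) has derived length at least \(d + 1\). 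Applied inductively to \(W_{n+1} = W_n \wr \integer/2\integer\), this yields derived length \(\geq n + 1\).

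The main obstacle is the lower bound, specifically verifying that the antidiagonal embedding of \(A\) into \((A \wr B)'\) is preserved under further commutation, so that \((A \wr B)^{(k)}\) really contains an isomorphic copy of \(A^{(k-1)}\) rather than a quotient thereof. This amounts to checking that the embedded copy sits inside the base group \(A^B\) (on which the derived series can be computed componentwise) and that the embedding is a homomorphism of groups, not merely of sets. Once this is established the induction closes cleanly; the upper bound is routine.
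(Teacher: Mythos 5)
The paper does not actually prove this lemma---it is quoted from Rose's textbook---so your proposal is supplying an argument where the paper gives only a citation. Your construction (iterated wreath products $W_{n+1} = W_n \wr \integer/2\integer$) is a standard and correct way to realise every derived length, and your upper bound is fine. However, the lower bound as written has a genuine gap, and it is exactly at the point you flag as the ``main obstacle'': the antidiagonal map $a \mapsto (a, a^{-1})$ is \emph{not} a group homomorphism when $A$ is nonabelian, since $(ab, (ab)^{-1}) = (ab, b^{-1}a^{-1})$ whereas $(a,a^{-1})(b,b^{-1}) = (ab, a^{-1}b^{-1})$. Already at the step from $W_2 \cong D_8$ to $W_3$ the base group is nonabelian, so the claimed injective homomorphism of $A$ into $(A \wr \integer/2\integer)'$ does not exist in the form you describe, and the iteration ``$(A\wr B)^{(k)}$ contains a copy of $A^{(k-1)}$'' does not close.

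The argument is repairable, and the repair is worth spelling out. One checks that $(A \wr \integer/2\integer)' = \set{(x,y) \in A \times A}{xy \in A'}$: it contains $A' \times A'$ (commutators within the base group) and all antidiagonal elements $(a^{-1},a) = [(a,1),\sigma]$, and the displayed set is a normal subgroup with abelian quotient. Consequently the first-coordinate projection restricted to $(A \wr \integer/2\integer)'$ is a \emph{surjection} onto $A$ (given $x \in A$, the element $(x, x^{-1})$ lies in the derived subgroup). Since derived length can only decrease under passing to quotients, this gives $\mathrm{dl}\bigl((A \wr \integer/2\integer)'\bigr) \geq \mathrm{dl}(A)$, and since a nontrivial solvable group $G$ satisfies $\mathrm{dl}(G) = \mathrm{dl}(G') + 1$, the lower bound $\mathrm{dl}(A \wr \integer/2\integer) \geq \mathrm{dl}(A) + 1$ follows. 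So you should replace ``contains an isomorphic copy of $A$'' by ``surjects onto $A$'' throughout the lower-bound step; a quotient is just as good as a subgroup for bounding derived length from below, and unlike the subgroup claim it is actually true.
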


\begin{lemma}\label{lem:solvable-reduce}
  If \(G\) is a solvable group of derived length \(n + 1\), then some factor group of \(G\) is solvable of derived length \(n\).
\end{lemma}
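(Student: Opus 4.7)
The plan is to exhibit a factor group of \(G\) with derived length exactly \(n\) by quotienting out the penultimate term of the derived series of \(G\). Recall that the derived series of \(G\) is defined by \(G^{(0)} = G\) and \(G^{(i+1)} = [G^{(i)}, G^{(i)}]\), and the derived length of \(G\) is the least \(k\) with \(G^{(k)} = \{1\}\). By hypothesis, we have \(G^{(n+1)} = \{1\}\) while \(G^{(n)} \neq \{1\}\).

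First I would observe that \(G^{(n)}\) is a characteristic subgroup of \(G\), hence normal, so the quotient \(H \defeq G / G^{(n)}\) is a well-defined group. I would then compute the derived series of \(H\) and use the standard fact that \(H^{(i)} = G^{(i)} G^{(n)} / G^{(n)}\); since \(G^{(n)} \subseteq G^{(i)}\) for all \(i \leq n\), this simplifies to \(H^{(i)} = G^{(i)} / G^{(n)}\) for \(i \leq n\). From this it is immediate that \(H^{(n)} = G^{(n)}/G^{(n)} = \{1\}\), so \(H\) is solvable of derived length at most \(n\).

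The remaining task is to show that the derived length of \(H\) is exactly \(n\), i.e., \(H^{(n-1)} = G^{(n-1)}/G^{(n)}\) is non-trivial. This amounts to ruling out the equality \(G^{(n-1)} = G^{(n)}\). But if that equality held, then \(G^{(n)} = [G^{(n-1)}, G^{(n-1)}] = [G^{(n)}, G^{(n)}] = G^{(n+1)} = \{1\}\), contradicting the assumption that \(G^{(n)}\) is non-trivial. Hence \(G^{(n-1)} \supsetneq G^{(n)}\), which gives \(H^{(n-1)} \neq \{1\}\) and therefore \(H\) has derived length exactly \(n\).

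The argument is essentially mechanical once the correct quotient is chosen; the only subtle point is verifying that the derived length does not drop below \(n\), which is exactly where the hypothesis \(G^{(n)} \neq \{1\}\) is used. No obstacle of substance is anticipated.
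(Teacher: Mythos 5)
Your proof is correct and follows essentially the same route as the paper's: both quotient by the penultimate derived subgroup \(G^{(n)}\) (characteristic, hence normal) and identify the derived series of \(G/G^{(n)}\) with the image of that of \(G\). Your explicit check that the derived length does not drop below \(n\) (via \(G^{(n-1)} \neq G^{(n)}\)) is a small point the paper leaves implicit, but it is the same argument.
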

\begin{proof}
  Let \(G = G^{(0)} \rhd G^{(1)} \rhd \dotsb \rhd G^{(n)} \rhd G^{(n + 1)} = \{1_G\}\) be the derived series of \(G\),
  where \(G^{(i + 1)} \defeq [G^{(i)}, G^{(i)}]\).
  Since each \(G^{(i + 1)}\) is a characteristic subgroup of \(G^{(i)}\), the group \(G^{(n)}\) is also a characteristic subgroup of \(G\),
  hence in particular a normal subgroup of each \(G^{(i)}\).
  Let \(\bar{G} \defeq G / G^{(n)}\).
  Then \(G^{(0)} / G^{(n)} \rhd G^{(1)} / G^{(n)} \rhd \dotsb \rhd G^{(n)} / G^{(n)} = \{1\}\) is the derived series of \(\bar{G}\) of length \(n\),
  since we have an isomorphism \([G^{(i)} / G^{(n)}, G^{(i)} / G^{(n)}] \cong [G^{(i)}, G^{(i)}] / G^{(n)} = G^{(i + 1)} / G^{(n)}\).
\end{proof}

\begin{proof}[Proof of Lemma~\ref{lem:two-gen-solvable}]
  By Lemma~\ref{lem:solvable-length-n}, there exists a finite solvable group \(G\) of derived length \(n\).
  By Theorem~\ref{thm:embed-solvable},
  \(G\) can be embedded in some two-generated finite group \(H\) of derived length at most \(n + 2\).
In particular, the derived length of \(H\) is one of \(n\), \(n + 1\), or \(n + 2\).
  Thus, by Lemma~\ref{lem:solvable-reduce},
  we obtain a two-generated finite solvable group of derived length \(n\) as a factor group of \(H\).
\end{proof}

Lemma~\ref{lem:WP-synt} below is the last piece of the proof of Proposition~\ref{prop:strict}.
\begin{lemma}[folklore, \eg, {\cite[3.4.9 Proposition]{Holt2017}}]\label{lem:WP-synt}
  Let \(G\) be a group and \(\eta\colon A^* \to G\) be a surjective morphism.
  Then the syntactic monoid of the language \(\eta^{-1}(1_G) \subseteq A^*\) is isomorphic to \(G\).
\end{lemma}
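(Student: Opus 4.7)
The plan is to show directly that the syntactic congruence \(\sim_L\) on \(A^*\), where \(L = \eta^{-1}(1_G)\), coincides exactly with the kernel congruence of \(\eta\), i.e.\ \(u \sim_L v \iff \eta(u) = \eta(v)\). Once this is established, the conclusion is immediate: the syntactic monoid is \(A^*/{\sim_L} = A^*/\ker\eta\), which is isomorphic to \(G\) by the first isomorphism theorem and the surjectivity of \(\eta\).

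First I would handle the easy direction (\(\eta(u) = \eta(v)\) implies \(u \sim_L v\)), which holds for any morphism recognising any language and uses nothing about \(G\) being a group. Indeed, if \(\eta(u) = \eta(v)\), then for every \(x, y \in A^*\) the morphism property gives \(\eta(xuy) = \eta(x)\eta(u)\eta(y) = \eta(x)\eta(v)\eta(y) = \eta(xvy)\); since \(L = \eta^{-1}(1_G)\), we get \(xuy \in L \iff \eta(xuy) = 1_G \iff \eta(xvy) = 1_G \iff xvy \in L\).

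The converse direction is where the group structure and surjectivity are essential, and this is the only step requiring a small argument. Assume \(u \sim_L v\) and set \(g \defeq \eta(u)\), \(h \defeq \eta(v)\). Since \(\eta\) is surjective onto the group \(G\), there exists a word \(y \in A^*\) with \(\eta(y) = g^{-1}\). Then \(\eta(uy) = gg^{-1} = 1_G\), so \(uy \in L\). Applying \(\sim_L\) with \(x = \varepsilon\) yields \(vy \in L\), i.e.\ \(\eta(vy) = hg^{-1} = 1_G\), so \(h = g\), as required.

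The potential obstacle — though mild — is simply ensuring the argument in the second direction is phrased precisely: one must invoke surjectivity to produce a word representing \(g^{-1}\), and one must invoke the group axioms to cancel. Neither holds for general recognising monoids, so the lemma is genuinely specific to groups. Combining the two directions gives \({\sim_L} = \ker\eta\), and the induced map \(A^*/\ker\eta \to G\) is an isomorphism (surjective by hypothesis on \(\eta\), injective by definition of the kernel), which identifies the syntactic monoid of \(L\) with \(G\).
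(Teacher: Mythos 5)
Your proposal is correct. The paper itself gives no proof of this lemma---it is stated as folklore with a citation to Holt--Rees--R\"over---and your argument is exactly the standard one: the inclusion \(\ker\eta \subseteq {\sim_L}\) is the generic fact that a recognising morphism's kernel congruence refines the syntactic congruence, and the converse inclusion correctly uses surjectivity to realise \(\eta(u)^{-1}\) by a word and the group axioms to cancel, after which the first isomorphism theorem for monoids finishes the job.
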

The language \(\eta^{-1}(1_G)\) is known as the \emph{word problem} for the group \(G\) (over \(A\)).

\begin{proof}[Proof of Proposition~\ref{prop:strict}]
  By Lemma~\ref{lem:two-gen-nilpotent}, there exists a two-generated finite nilpotent group \(G\) of class \(n + 1\).
  Since \(G\) is finite and generated by two elements as a group, it is generated by the two elements as a monoid.
  Hence there exists a surjective morphism \(\eta\colon A^* \to G\).
  Since \(\eta^{-1}(1_G)\) is clearly recognised by \(G\), we have \(\eta^{-1}(1_G) \in \gnil_A^{\leq n + 1}\).
  Contrary, suppose \(\eta^{-1}(1_G) \in \gnil_A^{\leq n}\).
  Then \(\eta^{-1}(1_G)\) is recognised by some finite nilpotent group \(H\) of class \(n\),
  hence the syntactic monoid of \(\eta^{-1}(1_G)\), which is isomorphic to \(G\) by Lemma~\ref{lem:WP-synt}, divides \(H\).
  In general, if \(K\) is a nilpotent group of class \(c\),
  then every subgroup of \(K\) and every factor group of \(K\) is nilpotent of class at most \(c\).
  Hence \(G\) is nilpotent of class at most \(n\), contradiction.
  (Note that every submonoid of a group is a subgroup.)
  Thus \(\gnil_A^{\leq n} \subsetneq \gnil_A^{\leq n + 1}\).

  By Lemma~\ref{lem:two-gen-solvable}, there exists a two-generated finite solvable group \(G\) of derived length \(n + 1\).
  Then one can similarly show \(\eta^{-1}(1_G) \in \gsol_A^{\leq n + 1} \setminus \gsol_A^{\leq n}\) for a surjective morphism \(\eta\colon A^* \to G\).
\end{proof}
 
\end{document}